\newtheorem{proposition}{Proposition}
\newtheorem{assumption}{Assumption}
\newtheorem{condition}{Condition}
\newtheorem{theorem}{Theorem}
\journal{arXiv}
\begin{document}

\begin{frontmatter}



\title{Real-time Bus Travel Time Prediction and Reliability Quantification: A Hybrid
Markov Model}


\author[inst1]{Yuran Sun\corref{cor1}}
\ead{yuransun@ufl.edu}
\cortext[cor1]{Corresponding author.}

\affiliation[inst1]{organization={Department of Civil and Coastal Engineering},
            addressline={University of Florida}, 
            city={Gainesville},
            postcode={32608}, 
            state={FL},
            country={US}}

\author[inst2]{James C. Spall}
\author[inst3]{Wai Wong}
\author[inst1]{Xilei Zhao}

\affiliation[inst2]{organization={Department of Applied Mathematics and Statistics},
            addressline={John Hopkins University}, 
            city={Baltimore},
            postcode={21211}, 
            state={MD},
            country={US}}
\affiliation[inst3]{organization={Department of Civil and Natural Resources Engineering},
            addressline={University of Canterbury}, 
            city={Christchruch},
            postcode={8140}, 
            country={New Zealand}}

\begin{abstract}
Accurate and reliable bus travel time prediction in real-time is essential for improving the operational efficiency of public transportation systems. However, this remains a challenging task due to the limitations of existing models and data sources. This study proposed a hybrid Markovian framework for real-time bus travel time prediction, incorporating uncertainty quantification. Firstly, the bus link travel time distributions were modeled by integrating various influential factors while explicitly accounting for heteroscedasticity. Particularly, the parameters of the distributions were estimated using Maximum Likelihood Estimation, and the Fisher Information Matrix was then employed to calculate the 95\% uncertainty bounds for the estimated parameters, ensuring a robust and reliable quantification of prediction uncertainty of bus link travel times. Secondly, a Markovian framework with transition probabilities based on previously predicted bus link travel times was developed to predict travel times and their uncertainties from a current location to any future stop along the route. The framework was evaluated using the General Transit Feed Specification (GTFS) Static and Realtime data collected in 2023 from Gainesville, Florida. The results showed that the proposed model consistently achieved better prediction performance compared to the selected baseline approaches (including historical mean, statistical and AI-based models) while providing narrower uncertainty bounds. The model also demonstrated high interpretability, as the estimated coefficients provided insights into how different factors influencing bus travel times across links with varying characteristics. These findings suggest that the model could serve as a valuable tool for transit system performance evaluation and real-time trip planning.

\end{abstract}
\begin{keyword}
Bus travel time prediction \sep Uncertainty quantification \sep Markovian framework \sep Real-time \sep General Transit Feed Specification (GTFS)
\end{keyword}

\end{frontmatter}


\section{Introduction}
Public transportation is a cornerstone of urban mobility, connecting millions to key destinations \citep{APTA_Public_Transportation_Fact}. Buses, in particular, play a crucial role in alleviating traffic congestion \citep{he2021improving} and emissions due to their high passenger capacity per unit of road space, positioning them as a sustainable transportation option \citep{kumar2017bus}. However, their attractiveness to commuters is often hampered by the high uncertainty of travel time, compared to private cars \citep{ma2019bus, bai2015dynamic, APTA_Transit_Ridership_2024}. Accurate travel time prediction in real-time is essential to enhance service quality and passenger experience and thus increase bus ridership \citep{mori2015review, estrada2015experimental}. Nonetheless, achieving this remains a challenging task due to unpredictable traffic conditions, signal delays, and other stochastic factors affecting bus operations \citep{he2018travel}.

To address the challenges in bus travel time prediction, many studies have explored a wide range of modeling approaches \citep{bai2015dynamic, petersen2019multi, kumar2019real, amita2015prediction, yu2018prediction}. Among them, traditional statistical models are widely applied due to their well-established theoretical foundations and explicit mathematical formulations, which enhance interpretability and provide insights into the relationships between travel time and influencing factors \citep{rudin2019stop, erasmus2021interpretability}. Learning these relationships is crucial for transit agencies and urban planners, as it enables data-driven interventions to optimize transit operations \citep{mohamed2021identification}. For instance, understanding how peak-hour demand affects travel time can guide the implementation of bus-priority signal control at critical intersections, reducing delays and improving service reliability \citep{mirchandani2001approach}. Additionally, in recent years, AI-based models (e.g., Random Forest, Multilayer Perceptron) have demonstrated superior predictive capabilities and have become integral to the development of real-time bus travel time information systems \citep{huang2021bus, singh2022review}. While existing studies have contributed significantly to bus travel time prediction, major limitations remain, particularly in terms of modeling methodologies and data availability.

Firstly, while statistical models provide greater interpretability, they often lag behind AI-based models in terms of predictive accuracy \citep{yu2018prediction, ashwini2022bus}. On the other hand, although AI-based models exhibit superior predictive performance in bus travel time forecasting, their black-box nature raises concerns about their ability to elucidate the underlying relationships between influencing factors and travel time \citep{rudin2019stop, erasmus2021interpretability}. Consequently, developing a modeling framework that balances interpretability with predictive performance, comparable to that of AI-based models, remains a crucial area of ongoing research.

The second limitation of current bus travel time prediction models is their exclusive focus on point estimates, which do not adequately capture travel time reliability. Studies addressing the reliability of predictions (that consider potential uncertainties and variations in travel times) remains scarce \citep{chen2024conditional, o2016uncertainty, yetiskul2012public}. Reliability, nevertheless, is equally crucial for enhancing bus service quality \citep{yetiskul2012public, ma2015modeling, yu2017using}, as it allows passengers to plan more effectively by accounting for travel time variations and enables transit agencies to monitor system performance fluctuations and optimize operations accordingly. Existing studies on travel time forecasting and uncertainty estimation have employed probabilistic frameworks, such as Markovian \citep{ma2017estimation, buchel2022modeling} and Bayesian approaches \citep{huang2021bus, chen2023probabilistic, chen2024conditional} to effectively model the stochastic nature of bus travel times. Yet, these methods still have notable limitations. Most probabilistic models rely solely on the inherent characteristics or statistical distributions of bus travel times, neglecting the influence of external factors such as weather and traffic conditions. Furthermore, these models are often static in nature \citep{chen2023probabilistic}, limiting their effectiveness in real-time and dynamic prediction.

Another limitation of existing models is that their performance is typically evaluated using empirical data \citep{yu2018prediction, moosavi2023evaluation}, with limited theoretical validation. In particular, there is a lack of consistency analysis, which examines whether a model’s estimates converge to the true underlying values as the sample size increases. Proving consistency of model parameters is crucial, as it ensures that the model’s performance is reliable given enough data.

In addition to modeling limitations, existing studies also face challenges related to data availability. Most research relies on Automatic Vehicle Location (AVL) data \citep{stephentransit} and GPS-based bus trajectory data \citep{taparia2021bus}, which provide rich information on bus locations and movements \citep{wu2023gtfs}. However, these datasets are often not publicly accessible, limiting their widespread use. Furthermore, they are not sufficiently comprehensive for accurate travel time prediction, as they lack critical information such as bus stop locations and real-time traffic conditions, necessitating the integration of supplementary data sources \citep{ma2019bus}. The General Transit Feed Specification (GTFS) data has recently emerged as a valuable alternative, offering publicly available static and real-time data on bus operations. GTFS static data provides essential information, including stop locations, route geometries, and schedules, while GTFS Realtime delivers standardized updates on real-time vehicle locations and service alerts. These features make GTFS a promising resource for monitoring bus movements and predicting travel times. Despite its considerable potential, GTFS remains relatively under-explored in comparison to other location data sources for bus travel time prediction. Although a few studies have leveraged GTFS to infer travel time \citep{chondrodima2021public, bv2024travel, ou2022ai}, its broader applications in this field are still in the early stages.

To address these research gaps, this study proposes a hybrid methodological framework for bus travel time prediction and uncertainty quantification, consisting of two key parts. First, the study models the distributions of three primary components of bus travel time—road travel time, stop dwell time, and intersection waiting/passing time—separately for each bus link, defined as the segment between two consecutive bus stops. Given that road travel time, the time a bus spends actively moving on the road, is highly affected by traffic conditions and congestion patterns, this study specifically focuses on modeling road travel time while accounting for these influencing factors. The parameters of the distributions are estimated using Maximum Likelihood Estimation (MLE) and the corresponding uncertainty bounds are derived from the Fisher Information Matrix (FIM). These distributions are then used to predict bus travel times for each link, incorporating the estimated uncertainty bounds. In the second part, a Markovian framework leveraging transition probabilities based on previously predicted bus link travel times is introduced to predict real-time travel times and estimate uncertainty bounds for trips from a current location to any future stop along the route (as captured through periodic data updates). The proposed framework is implemented and validated using GTFS static and real-time data. Its performance is evaluated against various baseline models, including historical mean, statistical, and AI-based approaches, to assess its accuracy and reliability. The results indicate that the proposed model consistently achieves satisfying performance in bus travel time prediction, demonstrating predictive accuracy comparable to AI-based techniques. Additionally, it generates narrower uncertainty bounds while offering greater interpretability, making it a more transparent and reliable alternative for bus travel time prediction.

The proposed study offers several key contributions: First, it proposes a novel methodological framework that integrates the MLE process with a Markov chain model. This approach not only achieves real-time travel time predictions with predictive performance comparable to that of AI-based models but also provides narrower uncertainty bounds than those generated by traditional statistical models, demonstrating greater confidence and precision in its predictions.

Secondly, by incorporating parametric formulations that account for the factors influencing travel time distributions, the proposed model enhances the understanding of how these factors impact travel times. Its high interpretability facilitates informed decision-making by stakeholders.

Third, the model makes a theoretical contribution through the rigorous proof of its consistency. Such a foundation not only reinforces the model's credibility under conditions of sufficient data availability but also establishes a robust basis for its practical application in real-world scenarios.

Finally, the model is particularly well-suited for the new publicly available GTFS data. Specific algorithms have been developed to infer time components based on the unique structure of GTFS data, and the model is designed to seamlessly incorporate periodic updates from GTFS Realtime feeds. Additionally, the method enables the automatic detection of traffic conditions by directly analyzing bus movement patterns using GTFS Realtime data. This eliminates the reliance on additional, non-public data sources.

The structure of the paper is as follows: Section 2 provides a review of the literature on existing methods for bus travel time prediction, modeling bus travel time distributions, and probabilistic forecasting approaches. Section 3 presents the methodological framework, including the inference of travel time components, the prediction of individual link travel time components, the estimation of their uncertainty bounds, and the development of the final Markovian framework for real-time bus travel time prediction. Section 4 introduces the bus routes and datasets used in this study, and reports the results of statistical tests evaluating the assumptions proposed in the methodology. The results and corresponding discussions are presented in Section 5. Finally, Section 6 summarizes the research and key findings, explores potential applications, discusses limitations, and outlines directions for future research.

\section{Literature Review}
\subsection{Travel Time Prediction}
Bus travel time prediction has garnered significant attention in recent decades due to its pivotal role in facilitating users to make informed pre-trip and en-route decisions, enhancing their satisfaction, and hence boosting ridership \citep{mori2015review}. Many methodologies have been developed to address the complexities of travel time forecasting. These methods can be broadly categorized based on the types of models employed, ranging from historical mean approaches to statistical models, and, more recently, AI-based techniques.

In earlier years, the historical mean method \citep{4357735}, which predicts travel time by averaging previously recorded travel times, was widely adopted due to its simplicity and computational efficiency \citep{114368}. However, this approach fails to account for the stochastic nature of traffic conditions, which can result in reduced reliability in real-world applications.

Statistical models used for travel time prediction typically follow a predetermined mathematical model and require the estimation of key parameters \citep{mori2015review}. These parameters are often estimated based on historical data to optimize model performance. Common examples of statistical models include regression models, time-series models, and the Kalman Filter. Regression models often underperform compared to more complex approaches, primarily due to their strict assumptions and limited capacity to capture the complex relationships between influencing factors and bus travel times. Consequently, they are commonly utilized as baseline models \citep{yu2018prediction}. For time-series models, autoregressive integrated moving average (ARIMA) models are commonly applied for predicting bus travel times. For example, \citet{suwardo2010arima} employed an ARIMA model to forecast bus travel times between end-to-end bus terminals. In addition, autoregressive (AR) models and their variants are widely applied. \citet{dhivya2020bus} developed two AR-based models for predicting bus travel times across 500-meter route segments: a seasonal AR model to address non-stationary travel patterns and a linear non-stationary AR model leveraging temporal correlations in travel time data. Kalman Filter approaches are particularly effective in dynamic environments requiring real-time updates, as they continuously adjust predictions based on new data \citep{aldokhayel2018kalman}. Moreover, Kalman Filters often perform well when integrated with other predictive models, such as time-series \citep{xu2017real} or AI-based models \citep{bai2015dynamic}. \citet{achar2019bus} proposed a spatial Kalman Filter model that captures the spatial and temporal correlations in traffic to facilitate real-time predictions of bus travel times across 500-meter sections. These statistical methods are often grounded in well-established theories and are considered highly interpretable due to their parametric formulations or probabilistic frameworks. Additionally, their rigorous structure inherently facilitates the quantification of uncertainty in bus travel times. However, their predictive power may be limited compared to AI-based models.

AI-based models are often equipped to learn more complex effects of various factors on bus travel time \citep{chen2023probabilistic}, typically resulting in high prediction accuracy. \citet{ashwini2022bus} conducted a comparative study using bus GPS data between statistical models, such as linear and ridge regression, and AI-based models, demonstrating that the latter significantly outperformed the former in predicting bus travel times of some selected entire bus routes in the Tumakuru Smart City case study. Basic AI techniques, such as decision trees \citep{moosavi2023evaluation}, random forests \citep{yu2018prediction}, boosting algorithms \citep{zhang2015gradient} and support vector machines (SVM) \citep{ma2019bus}, are commonly employed in many existing studies. These models are relatively straightforward to implement and require less in feature dimension. For example, \citet{yu2018prediction} proposed a Random Forest model integrated with a nearest neighbor approach to predict bus travel times between stops using AVL data, achieving high prediction accuracy and strong model performance. With the advancement of modern technologies, many studies have adopted more sophisticated models, such as deep neural networks (DNNs) \citep{wang2018will} and Long Short-Term Memory (LSTM) networks \citep{petersen2019multi}, to predict travel time. These models excel at automatically extracting complex patterns from data, making them particularly suited for large-scale and intricate prediction tasks. Among these, Artificial Neural Networks (ANNs) remain one of the most widely utilized models in the field of travel time prediction \citep{chien2002dynamic, jeong2004bus, gurmu2014artificial, ramakrishna2015bus, amita2016prediction}. \citet{as2018dynamic} demonstrated that ANNs significantly outperformed historical mean-based methods in predicting bus travel times between adjacent stops using bus probe data. Additionally, \citet{petersen2019multi} integrated LSTM networks with convolutional layers, which are typically used for graph-structured data, to predict bus link travel times. This hybrid approach enabled the model to capture both spatial and temporal correlations within transit networks, thereby improving its overall predictive performance. Although AI-based models typically achieve high predictive performance, they are often considered black-box models due to their lack of interpretability. This makes it challenging to directly understand the impact of key factors on bus travel times. Moreover, these models generally require large-scale training data with high temporal granularity—particularly temporal AI-based models—which may not be suitable for analyzing travel times on low-frequency bus routes.

Although extensive research has been conducted on bus travel time prediction, and statistical models are well-suited for uncertainty quantification, most existing methods focus on point estimation, providing a single expected travel time. Approaches that explicitly account for variability in their predictions remain underexplored in the literature.

\subsection{Bus Travel Time Distribution and Probabilistic Forecasting}
This subsection focuses on the distributional characteristics and probabilistic forecasting of bus travel times, highlighting the unique patterns in bus travel time distributions compared to normal traffic.

Modeling the distribution of bus travel times is essential for accurate probabilistic forecasting, as it enable estimation of bus travel times and their uncertainties across varying conditions. Early studies in this area typically modeled the distribution of daily bus travel times and average trip duration. \citet{taylor1982travel} found that daily bus travel time typically follows normal distribution, while \citet{jordan1979zone} identified that average trip times conform to gamma distribution. Research by \citet{mazloumi2010using} investigated travel time distributions across different departure time windows, revealing that travel times closely adhere to a normal distribution during narrow departure windows and peak hours within broader windows. Conversely, in non-peak hours within broad windows, travel times often follow a log-normal distribution. Numerous studies have supported the finding that bus travel time tends to be right-skewed and aligns with a log-normal distribution \citep{uno2009using, duran2016estimation, kieu2015public}. Comparative analyses of various distributions, such as those by \citet{chepuri2020development} and \citet{harsha2022probability}, suggest that bus travel times are more aptly described by the generalized extreme value (GEV) distribution. However, \citet{ma2016modeling} found that the performance of common unimodal distributions, such as normal, log-normal, and GEV, in modeling travel time shows negligible differences. Notably, when modeled as a multi-modal distribution, bus travel times often conform to conform to a Gaussian mixture distribution. This conclusion was reached by \citet{ma2016modeling}, who conducted the Anderson–Darling (AD) test to evaluate and select distribution candidates, and compared the fitting errors, and metrics of symmetry, normality, and multimodality of bus travel times across Gaussian mixture distributions and other candidate distributions. To model bus travel time distributions, statistical frameworks are frequently employed, with Markovian frameworks being particularly common for estimating these distributions. For instance, \citet{ma2017estimation} applied a generalized Markov chain approach to estimate trip travel time distributions based on AVL data, formulating transition probabilities as functions of explanatory variables. \citet{buchel2022modeling} employed a hidden Markov chain approach to estimate road travel time and stop dwell time distributions using open-source departure and arrival time data, while accounting for the conditional dependence between section travel times.

Current literature predominantly employs Bayesian frameworks for probabilistic forecasting models. For example, \citet{huang2021bus} introduced a Bayesian Support Vector Regression framework integrated with Functional Data Analysis (FDA) to predict short-term link travel times using GPS data while accounting for various uncertainties. Similarly, \citet{chen2023probabilistic} proposed a Bayesian Gaussian mixture model to forecast bus link travel time and associated uncertainties with bus in-out-stop record data. Their model utilized a Gaussian mixture distribution, assuming consistent Gaussian components throughout different times of the day, with time-dependent mixing coefficients to capture dynamic changes in bus operations. Markov chain Monte Carlo methods were used to derive the posterior distributions of the parameters, enabling probabilistic forecasting after parameter estimation. \citet{chen2024conditional} subsequently introduced a novel Bayesian Markov regime-switching vector regression model, which demonstrated improved performance compared to the earlier Bayesian Gaussian mixture model. 

These forecasting models contribute to advancements in bus link travel time prediction and uncertainty estimation. However, relatively few studies have explored the prediction of remaining travel time from a given location to any future bus stop along the route. Additionally, it remains unclear whether some of these models can achieve predictive performance comparable to existing baseline models, especially AI-based models.

\section{Method}
A bus link is defined as the road segment between two adjacent bus stops. The bus link travel time consists of three main components: road travel time, stop dwell time, and intersection waiting and passing time, each of which is modeled separately. These relationships can be expressed through the following equation:
\begin{equation}
    \Delta t_{l} = \Delta t_{r} + \Delta t_{d} + \Delta t_{s},
\end{equation} 
where $\Delta t_l$ represents the total link travel time, $\Delta t_r$ is the road travel time, $\Delta t_d$ denotes the stop dwell time, and $\Delta t_s$ accounts for the intersection waiting and passing time.

The process of the methodology consists of three main steps: First, the three types of time are inferred. Next, their distributions are modeled, and both their predicted values and associated uncertainty bounds are estimated. Finally, a Markov chain model is introduced to generate real-time predictions of bus travel times from the current location to any future stop along the route.

\subsection{Link Travel Time Inference}
The three types of travel times are inferred using a combination of GTFS static and Realtime data, as well as road segment information from OpenStreetMap. Specifically, the bus route's shape within the road network, the stop locations, and the length of each bus link are extracted from the static GTFS data. Intersection locations are obtained from OpenStreetMap. The bus stops and intersections are projected onto the bus route, and a 20-meter buffer zone around each projected stop and intersection is defined as the area of influence \citep{jiang2024real}. Next, the coordinates and corresponding timestamps for each bus trip's trajectory are extracted from the GTFS Realtime data.

\begin{figure}[!ht]
  \centering
  \label{lab: figre1}
  \includegraphics[width=0.8\textwidth]{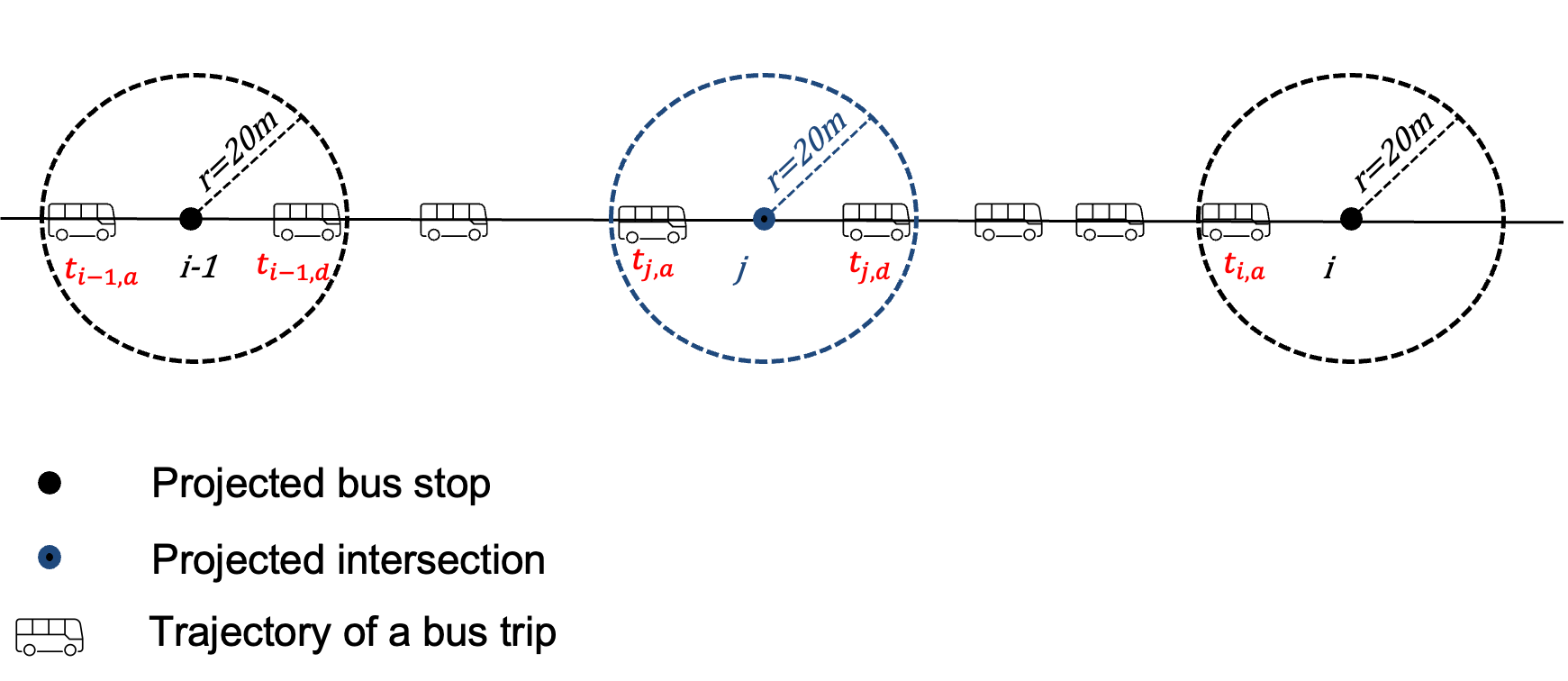}
  \caption{Example of bus stops, intersections, their buffer zones, and bus trajectories}\label{fig: link}
\end{figure}

As illustrated in Figure 1, for a given bus trip, when the bus first exits the 20-meter buffer zone of a projected stop $i$ (or a projected intersection $j$), the corresponding timestamp, $t_{i,d}$ ($t_{j,d}$), is recorded as the departure time from that stop (or intersection) \citep{liu2012bus}. Similarly, when the bus first enters the 20-meter buffer zone of stop $i$ (or intersection $j$), the corresponding timestamp, $t_{i,a}$ ($t_{j,a}$), is recorded as the arrival time at stop $i$ (or intersection $j$). The total travel time for the bus on the link between projected stops $i$-1 and $i$ is then estimated as:
\begin{equation}
    \Delta t_{li} = t_{i,d} - t_{i-1,d}.
\end{equation} 
The dwell time, $\Delta t_{di}$, at stop $i$ is inferred as:
\begin{equation}
    \Delta t_{di} = t_{i,d} - t_{i,a}.
\end{equation} 
If there are intersections between stops $i$-1 and $i$, the waiting and passing time, $\Delta t_{sj}$, at intersection $j$ is calculated as:
\begin{equation}
    \Delta t_{sj} = t_{j,d} - t_{j,a}.
\end{equation} 
Finally, if there are $J$ intersections between projected stops $i-1$ and $i$, the road travel time, $\Delta t_{ri}$, is estimated as:
\begin{equation}
    \Delta t_{ri} = t_{i,a}-t_{i-1,d}-\sum^J_{j=1} \Delta t_{sj}
\end{equation}

\subsection{Road Travel Time Prediction with Uncertainties}
In this section, we describe the approach for modeling the distribution of road travel time. Drawing on findings from existing literature \citep{he2018travel, hofmann2005impact, ma2019bus} and considering data availability, we incorporate four factors into the proposed model—peak hours, weekday/weekend variations, weather conditions (rain), and traffic congestion—as these are expected to influence the distribution of road travel time. The process of preparing the variables begins by extracting the time of day, day of the week, and weather conditions from the timestamps in the GTFS data and an external weather data source. 

\paragraph{\textbf{Time of day}} The time of day is recorded on an hourly basis and transformed into a binary variable to indicate whether the time falls within peak hours.
\paragraph{\textbf{Day of week}} The day of the week is similarly converted into a binary variable to distinguish weekdays from weekends.
\paragraph{\textbf{Weather condition}} Weather conditions, assumed to remain constant within each hour (as the resolution of weather data is hourly), are converted into a binary rain indicator to reflect the presence of rain or thunderstorms.
\paragraph{\textbf{Traffic congestion}} A traffic indicator is designed to capture instances when buses experience delays or stops due to traffic congestion. This factor is represented in real-time using GTFS data by calculating the space mean speed for each pair of records that fall within the bus link but outside the 20-meter buffer zone around bus stops and intersections (To ensure that the bus is traveling along the link, rather than dwelling at stops, waiting at intersections, or passing through intersections). The space mean speed is calculated as:
\begin{equation}
   v_k = \frac{d_{k,k-1}}{t_k-t_{k-1}},
\end{equation}
where $d_{k,k-1}$ represents the distance between the $(k)^{th}$ and $(k-1)^{th}$ records of a bus, and $t_k$ and $t_{k-1}$ are the corresponding timestamps. If the space mean speed on link $i$ for a given bus trip falls below a threshold value $V$, determined by spatial constraints and road characteristics, it is assumed that the bus is experiencing traffic congestion. In such cases, the traffic indicator for that link is set to 1. 

Once the factor preparation is complete, the modeling process is initiated to identify the distributions of the road travel times. Given the non-negative nature of road travel times and previous research indicating that road travel times generally follow a log-normal distribution \citep{uno2009using, duran2016estimation, kieu2015public}, this study adopts the log-normal distribution assumption for all links across the bus routes under various conditions. Subsequently, the two parameters of each log-normal distribution—mean and variance—are modeled as functions of the selected factors. The mean is expressed as a linear combination of these factors, while the variance is modeled by taking the exponential of a separate linear combination, following the approach outlined by \citet{rosopa2013managing}. This method is commonly used in transportation engineering to address heteroscedasticity \citep{greene2007heteroscedastic}. The detailed methodology is presented as follows:

Note that $\Delta t_{ri}$ denotes the road travel time on the link between stops $i-1$ and $i$, which is assumed to follow a log-normal distribution. By applying a logarithmic transformation, the variable $Y_{ri} = \text{ln}(\Delta t_{ri})$ is modeled as following a normal distribution with mean $\mu_{ri}$ and standard deviation $\sigma_{ri}$, and:
\begin{align}
    \mu_{ri} &= \bm{\beta}_{ri}^T\bm{X}_{ri}, \\
    \sigma_{ri}^2 &= \text{exp}(\bm{\gamma}_{ri}^T\bm{X}_{ri}).
\end{align}

Here, $\bm{\beta}_{ri} = [\beta_{ri0}, \beta_{ri1}, \dots, \beta_{ri4}]^T$, and $\bm{\gamma}_{ri} = [\gamma_{ri0}, \gamma_{ri1},\cdots, \gamma_{ri4}]^T$ are vectors of parameters to be estimated. $\bm{X}_{ri} = [X_{ri1}, X_{ri2}, X_{ri3}, X_{ri4}]^T$ represents the vector of corresponding variables for the link between stops $i-1$ and $i$. Specifically,
\begin{itemize}
    \item \(X_{ri1}\) is the rain indicator, with \(X_{ri1} = 1\) indicating rain or thunderstorm, and \(X_{ri1} = 0\) otherwise.
    \item \(X_{ri2}\) is the peak hour indicator, with \(X_{ri2} = 1\) indicating peak hours, and \(X_{ri2} = 0\) for non-peak hours.
    \item \(X_{ri3}\) is the weekday indicator, with \(X_{ri3} = 1\) indicating a weekday, and \(X_{ri3} = 0\) for weekends.
    \item \(X_{ri4}\) is the traffic indicator, with \(X_{ri4} = 1\) indicating the detection of slow movement or stopping of a bus on a link, and \(X_{ri4} = 0\) for the absence of slow movement or stopping.
\end{itemize}

Given that the log-transformed variable $Y_{ri} =$ ln$(\Delta t_{ri})$ follows a normal distribution influenced by various factors, the parameter set $\bm{\theta}_{ri} = [\bm{\beta}_{ri}, \bm{\gamma}_{ri}]^T$ is introduced, representing the coefficients associated with the mean and variance of $Y_{ri}$. We assume that observations of road travel time on a specific link are independent events. This implies that the running conditions for each bus on the road are randomly sampled throughout the day. This assumption aligns with the study’s objective to capture the general variability of road travel times over a broader time frame, rather than focusing on short-term dependencies. Additionally, this assumption is justified by the long headway of buses on the selected routes, which minimize temporal dependencies between consecutive buses. By leveraging the properties of the normal distribution, the log-likelihood function for the proposed model is formulated as:
\footnotesize
\begin{align}
 \ln(L(\mathbf{\bm{\theta_{ri}}})) &= -\frac{n}{2}\ln(2\pi) - \sum^n_{m=1}\ln(\sigma_{rim}) - \sum^n_{m=1}\frac{(y_{rim}-\mu_{rim})^2}{2\sigma^2_{rim}},
\end{align}
\normalsize
where $y_{rim}$ represents the $m_{th}$ observation of road travel time on the link between stops $i$-1 and $i$, while $\mu_{rim}$ and $\sigma_{rim}$ denote the actual mean and standard deviation corresponding to the $m_{th}$ observation, respectively. The estimates for $\bm{\theta}_{ri}$ are obtained by maximizing the log-likelihood function, $\ln{(L(\bm{\theta}_{ri}))}$, using numerical optimization techniques, as closed-form solutions are typically not available for this type of problem.

We proceed to prove the consistency of the MLE to justify its use in our road travel time prediction problem and to ensure, from a theoretical standpoint, that the model reliably produces accurate predictions based on the underlying data distribution.
\begin{proposition}
Under the specified assumptions (Assumption 1-2) and conditions (Condition 1-3) and Theorem 1, the MLE is consistent in the proposed model for estimating the parameters of the log-normal distribution of road travel times.
\end{proposition}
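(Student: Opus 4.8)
The plan is to prove the Proposition by verifying that the proposed heteroscedastic model satisfies every hypothesis of the general consistency result (Theorem 1), so that its conclusion transfers directly to $\hat{\bm{\theta}}_{ri}$. After the logarithmic transformation the model is a normal regression with mean $\mu_{ri}=\bm{\beta}_{ri}^T\bm{X}_{ri}$ and variance $\sigma_{ri}^2=\exp(\bm{\gamma}_{ri}^T\bm{X}_{ri})$, and the observations are independent by assumption. I would cast the argument in the classical Wald framework: writing $\ell_n(\bm{\theta}_{ri})=\ln L(\bm{\theta}_{ri})$ for the log-likelihood in Eq.~(10) and $\bm{\theta}_{ri}^0$ for the true value, consistency follows once I establish that (i) the parameter space $\Theta$ is compact with $\bm{\theta}_{ri}^0$ in its interior, (ii) $\bm{\theta}_{ri}^0$ is the unique maximizer of the population criterion $E[\ell(\bm{\theta}_{ri})]$, and (iii) the normalized log-likelihood $n^{-1}\ell_n$ converges to that criterion uniformly over $\Theta$. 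Item (i) is supplied by the stated assumption on the parameter space, and the smoothness of the normal density places $\bm{\theta}_{ri}^0$ in its interior.

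For the identifiability in item (ii) I would exploit the key structural fact that all four regressors are binary, so $\bm{X}_{ri}$ (augmented with the intercept) ranges over the finite set $\{1\}\times\{0,1\}^4$ of at most sixteen distinct covariate patterns, or \emph{cells}. Identifiability then reduces to showing that if $N(\bm{\beta}^T\bm{x},\exp(\bm{\gamma}^T\bm{x}))$ coincides with $N((\bm{\beta}^0)^T\bm{x},\exp((\bm{\gamma}^0)^T\bm{x}))$ for every cell $\bm{x}$ observed with positive frequency, then $\bm{\beta}=\bm{\beta}^0$ and $\bm{\gamma}=\bm{\gamma}^0$. Since a normal law is determined by its mean and variance, this forces $\bm{\beta}^T\bm{x}=(\bm{\beta}^0)^T\bm{x}$ and $\bm{\gamma}^T\bm{x}=(\bm{\gamma}^0)^T\bm{x}$ on the observed cells, which yields equality of the parameter vectors precisely when the matrix of observed patterns has full column rank — the design/full-rank condition I expect among Conditions 1--3. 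Equivalently, $E[\ell(\bm{\theta}_{ri})]-E[\ell(\bm{\theta}_{ri}^0)]$ is a nonnegative weighted sum of per-cell Kullback--Leibler divergences that vanishes only at the truth.

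For uniform convergence in item (iii) I must address that the data are independent but \emph{not} identically distributed, so the i.i.d. law of large numbers does not apply directly. I would instead partition the $n$ observations by cell: within each cell the responses are i.i.d.\ $N(\mu,\sigma^2)$, so the per-cell averaged log-likelihood converges by the ordinary LLN, and summing the finitely many cells weighted by their asymptotic frequencies (positive by the sampling condition) recovers the population criterion. Uniformity over $\Theta$ follows from a uniform LLN once an integrable envelope is exhibited, and here the binary covariates pay off: compactness of $\Theta$ together with boundedness of $\bm{X}_{ri}$ makes $\mu=\bm{\beta}^T\bm{X}$ bounded and $\sigma^2=\exp(\bm{\gamma}^T\bm{X})$ bounded both above and away from zero, so the summands $\ln\sigma_{rim}$ and $(y_{rim}-\mu_{rim})^2/(2\sigma_{rim}^2)$ are dominated by a function of finite expectation — the quadratic term because $Y_{ri}$ has a finite second moment. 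Controlling $1/\sigma^2=\exp(-\bm{\gamma}^T\bm{X})$ is the analytic crux, and it is exactly where compactness of $\Theta$ is indispensable.

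With (i)--(iii) in hand, the conclusion of Theorem 1 delivers $\hat{\bm{\theta}}_{ri}\xrightarrow{p}\bm{\theta}_{ri}^0$. I expect the main obstacle to be identifiability rather than the analytic estimates: one must argue carefully that the covariate patterns actually realized on a given link span the full coefficient space, since if some binary combination (say rain co-occurring with a weekend) never occurs, the corresponding coefficient is not identified and consistency fails for that component. I would therefore state the positive-frequency/full-rank design condition explicitly, confirm that Conditions 1--3 guarantee it on each link under study, and only then invoke Theorem 1 to complete the proof.
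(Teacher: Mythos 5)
Your proposal is correct in outline but takes a genuinely different route from the paper. The paper's proof is a cite-and-verify argument: it invokes a consistency theorem (Rublik, 1995) whose hypotheses are exactly the stated Conditions 1--3, and then checks each one — i.i.d.\ sampling follows directly from Assumption 1, continuity and differentiability are established by writing out the log-likelihood and computing $\partial \ln L/\partial \bm{\beta}_{ri}$ and $\partial \ln L/\partial \bm{\gamma}_{ri}$ in matrix form, and positive definiteness of the Fisher Information Matrix follows from its block-diagonal structure $\mathrm{diag}(\bm{Z}_{ri}^T\bm{S}_{ri}\bm{Z}_{ri},\, \bm{Z}_{ri}^T\bm{C}_{ri}\bm{Z}_{ri})$ together with the full-rank design guaranteed by Assumption 2. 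You instead run a Wald-type argument: compactness of the parameter space, identifiability of $\bm{\theta}_{ri}^0$ as the unique maximizer of the population criterion (via per-cell Kullback--Leibler divergences over the finitely many binary covariate patterns), and uniform convergence of $n^{-1}\ell_n$ established cell-by-cell with an integrable envelope. Each approach buys something: the paper's route is shorter, matches its stated conditions verbatim, and the FIM computation does double duty since the same matrix is reused for the confidence intervals in Equations (11)--(13); your route yields consistency of the \emph{global} maximizer rather than relying on Cramér-type conditions, and it surfaces a real practical issue the paper glosses over — if some covariate cell (e.g., rain on a weekend) is never observed on a link, the corresponding coefficients are unidentified and consistency fails componentwise, a failure mode that the paper's abstract ``linear independence'' assumption does not obviously preclude. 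Two caveats on your side: you attribute compactness of $\Theta$ to ``the stated assumption on the parameter space,'' but the paper states no such assumption, so you would need to add it explicitly; and your non-identically-distributed treatment (fixed design, responses differing across cells) departs from the paper's framework, where Assumption 1 treats the observation pairs as i.i.d.\ draws (random covariates), under which the ordinary i.i.d.\ law of large numbers applies directly and your cell-partition device is unnecessary.
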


\begin{assumption}
The observations are independent and identically distributed (i.i.d.).
\end{assumption}

\begin{assumption}
The variables are linearly independent.
\end{assumption}

\begin{condition}
The observations are independent and identically distributed (i.i.d.).
\end{condition}

\begin{condition}
The log-likelihood function is continuously differentiable with respect to the parameter set $\bm{\theta}$.
\end{condition}

\begin{condition}
The Fisher Information matrix $\bm{F}(\bm{\theta})$ is positive definite.
\end{condition}

\begin{theorem}[Consistency of MLE \citep{rublik1995consistency}] Under the above regularity conditions (Condition 1-3), the MLE $\bm{\hat{\theta}_n}$ is consistent:
\begin{equation*}
\bm{\hat{\theta}_n} \overset{P}{\longrightarrow} \bm{\theta_0} \quad \text{as} \quad n \to \infty,
\end{equation*}
that is, if, for any $\epsilon$ $>$ 0, the probability that $\bm{\hat{\theta}_n}$ deviates from $\bm{\theta_0}$ by more than $\epsilon$ approaches zero as $n$ approaches infinity.
\end{theorem}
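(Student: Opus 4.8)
The plan is to prove consistency through the classical Cramér local argument: under Conditions 1--3 I will show that the likelihood equation possesses, with probability tending to one, a root lying arbitrarily close to the true parameter $\bm{\theta}_0$, and that this root is the MLE. The differentiability in Condition 2 licenses a Taylor expansion of the log-likelihood, the i.i.d.\ structure of Condition 1 activates the weak law of large numbers, and the positive definiteness of the Fisher information in Condition 3 supplies the strict local concavity that forces such a root to exist and to contract onto $\bm{\theta}_0$.

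First I would introduce the per-observation log-likelihood $\ell(\bm{\theta};X)$, the score $\bm{S}_n(\bm{\theta})=\sum_{m=1}^n \nabla_{\bm{\theta}}\ell(\bm{\theta};X_m)$, and the Hessian $\bm{H}_n(\bm{\theta})=\sum_{m=1}^n \nabla^2_{\bm{\theta}}\ell(\bm{\theta};X_m)$. Using the standard fact that the score has mean zero and the Hessian has expectation $-\bm{F}(\bm{\theta}_0)$ at the truth, the weak law of large numbers gives
\begin{equation*}
\frac{1}{n}\bm{S}_n(\bm{\theta}_0)\overset{P}{\longrightarrow}\bm{0},
\qquad
\frac{1}{n}\bm{H}_n(\bm{\theta}_0)\overset{P}{\longrightarrow}-\bm{F}(\bm{\theta}_0).
\end{equation*}
Next I would fix a small radius $\delta>0$ and examine the normalized increment for $\bm{\theta}$ on the sphere $\|\bm{\theta}-\bm{\theta}_0\|=\delta$, which a second-order Taylor expansion writes as
\begin{equation*}
\frac{1}{n}\bigl[\ln L(\bm{\theta})-\ln L(\bm{\theta}_0)\bigr]
= (\bm{\theta}-\bm{\theta}_0)^{T}\tfrac{1}{n}\bm{S}_n(\bm{\theta}_0)
+ \tfrac{1}{2}(\bm{\theta}-\bm{\theta}_0)^{T}\tfrac{1}{n}\bm{H}_n(\bm{\theta}^{*})(\bm{\theta}-\bm{\theta}_0),
\end{equation*}
with $\bm{\theta}^{*}$ on the segment between $\bm{\theta}$ and $\bm{\theta}_0$.

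The crux is a sign comparison on this sphere. The linear term is bounded in magnitude by $\delta\,\|\tfrac{1}{n}\bm{S}_n(\bm{\theta}_0)\|=\delta\cdot o_P(1)$, which vanishes in probability for $\delta$ fixed; the quadratic term converges in probability to $-\tfrac{1}{2}(\bm{\theta}-\bm{\theta}_0)^{T}\bm{F}(\bm{\theta}_0)(\bm{\theta}-\bm{\theta}_0)\le -\tfrac{1}{2}\lambda_{\min}\!\bigl(\bm{F}(\bm{\theta}_0)\bigr)\,\delta^2$, strictly negative by Condition 3. Hence for each fixed small $\delta$ there is an event of probability tending to one on which the negative quadratic term dominates and $\ln L(\bm{\theta})<\ln L(\bm{\theta}_0)$ for all $\bm{\theta}$ with $\|\bm{\theta}-\bm{\theta}_0\|=\delta$. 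Because $\ln L$ is continuous (Condition 2), on this event it attains an interior maximum over the closed ball, and that maximizer $\bm{\hat{\theta}}_n$ solves $\bm{S}_n(\bm{\hat{\theta}}_n)=\bm{0}$ and satisfies $\|\bm{\hat{\theta}}_n-\bm{\theta}_0\|<\delta$. Since $\delta$ is an arbitrary small radius, this is exactly the $\varepsilon$--$\delta$ statement $\bm{\hat{\theta}}_n\overset{P}{\longrightarrow}\bm{\theta}_0$.

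The main obstacle lies in two technical points. First, controlling the Taylor remainder requires $\tfrac{1}{n}\bm{H}_n(\bm{\theta}^{*})$ to stay close to $-\bm{F}(\bm{\theta}_0)$ \emph{uniformly} over the small ball, not merely at $\bm{\theta}_0$; I would secure this by combining continuity of $\nabla^2_{\bm{\theta}}\ell$ with a standard integrable-envelope (dominance) condition so that a uniform law of large numbers applies, shrinking $\delta$ if necessary to keep the normalized Hessian negative definite throughout the ball. Second, the construction yields a consistent \emph{root} of the likelihood equation rather than the global maximizer; to identify it with the MLE I would invoke identifiability, so that $\bm{\theta}_0$ is the unique maximizer of the expected log-likelihood, or argue that eventual strict concavity on the ball renders the root unique. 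These are precisely the places where the regularity packaged in the cited conditions is genuinely used; everything else reduces to Taylor's theorem and the weak law of large numbers.
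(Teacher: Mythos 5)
First, a point of orientation: the paper does not prove this theorem at all. Theorem 1 is imported as a black-box result attributed to Rublik (1995), and the paper's own proof effort is spent on Proposition 1, i.e., on verifying that Conditions 1--3 hold for the specific heteroscedastic log-normal model (continuity and differentiability of the log-likelihood, positive definiteness of the block-diagonal Fisher information matrix). Your proposal therefore supplies a proof where the paper offers only a citation, and the natural benchmark is the classical literature rather than anything in the text. Within that frame, your argument is the standard Cram\'er local-root construction --- normalize the score and Hessian by the weak law of large numbers, Taylor-expand to second order, and run the sign comparison on a small sphere around $\bm{\theta}_0$ --- and as a roadmap it is the right one.

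Measured against the theorem as stated, however --- consistency of the MLE under Conditions 1--3 \emph{alone} --- there are genuine gaps, both of which you flag but neither of which can be closed without strengthening the hypotheses. First, Condition 2 grants only continuous \emph{first-order} differentiability of the log-likelihood, whereas your expansion needs two derivatives, the information identity $E[\nabla^2_{\bm{\theta}}\ell(\bm{\theta}_0;X)] = -\bm{F}(\bm{\theta}_0)$, and a dominated (uniform) law of large numbers for the Hessian over a neighborhood of $\bm{\theta}_0$; none of this follows from the stated conditions, so your ``standard integrable-envelope condition'' is an added hypothesis, not a consequence. Second, the sphere argument yields a consistent sequence of \emph{local} maximizers (roots of the likelihood equation), not consistency of the global maximizer that defines the MLE; bridging that gap requires global identifiability or concavity in the sense of Wald-type conditions, which Conditions 1--3 do not supply --- positive definiteness of $\bm{F}(\bm{\theta}_0)$ gives only local identifiability, and there are classical examples (e.g., normal mixtures) where a consistent root exists but the global MLE misbehaves. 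There is also the minor selection issue that the root you construct may depend on the radius $\delta$, so a single estimator sequence must be exhibited (e.g., the root nearest $\bm{\theta}_0$). In fairness, much of this under-determination is inherited from the paper itself: as written, the theorem is not provable from Conditions 1--3 alone, which is presumably why the authors cite Rublik rather than prove it. But your write-up should say explicitly that what it establishes is the Cram\'er-type theorem (existence of a consistent root under twice-differentiability plus domination), and that identifying that root with the MLE of the statement requires assumptions beyond those listed.
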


\begin{proof}
We establish the validity of Proposition 1 by demonstrating that the regularity conditions outlined in Theorem 1 are satisfied for the road travel time model.

\paragraph{\textbf{i.i.d}} Under Assumption 1, the observations are independent and identically distributed (i.i.d.).

\paragraph{\textbf{Continuity and differentiability}}
\footnotesize
\begin{align*}
\ln(L(\mathbf{\bm{\theta_{ri}}})) &= -\frac{n}{2}\ln(2\pi) - \frac{1}{2}\sum^n_{m=1}\ln(\sigma_{rim}^2) - \sum^n_{m=1}\frac{(y_{rim}-\mu_{rim})^2}{2\sigma^2_{rim}} \\
&= -\frac{n}{2}\ln(2\pi) - \frac{1}{2}\sum^n_{m=1}\bm{\gamma}_{ri}^T\bm{x}_{rim} \\
&\quad -\sum^n_{m=1}\frac{(y_{rim}-\bm{\beta}_{ri}^T\bm{x}_{rim})^2}{2\exp{(\bm{\gamma}_{ri}^T\bm{x}_{rim})}},
\end{align*}
\normalsize
where $\bm{x}_{rim} = [x_{ri1m}, x_{ri2m}, x_{ri3m}, x_{ri4m}]^T$ denotes the vector of the four independent variables for the $m_{th}$ observation. In the log-likelihood function, the first term is a constant. The second term is the summation of linear functions, while the third term is the summation of quotients. In each quotient, the numerator is the square of a linear function, and the denominator is the exponential of a linear function. Therefore, the log-likelihood function is a composition of continuous functions, and is thus continuous.
\begin{align*}
\frac{\partial \ln(L(\bm{\theta}_{ri}))}{\partial \bm{\beta}_{ri}} &= \bm{Z}_{ri}\bm{S}_{ri}\bm{R}_{ri} \\
\frac{\partial \ln(L(\bm{\theta}_{ri}))}{\partial \bm{\gamma}_{ri}} &= -\bm{Z}_{ri}\bm{1} + \bm{Z}_{ri}\bm{S}_{ri}\bm{Q}_{ri},
\end{align*}
where $\bm{Z}_{ri}$ represents a 4 $\times$ $n$ matrix containing the observations of the four independent variables. $\bm{S}_{ri} = \text{diag}(\frac{1}{\sigma_{ri1}^2},\frac{1}{\sigma_{ri2}^2},\dots,\frac{1}{\sigma_{rin}^2})$ is a $n$ $\times$ $n$ diagonal matrix with the inverse variances on the diagonal. $\bm{R}_{ri}$ is the vector of residuals, representing the difference between the observed values $y_{rim}$ and the corresponding actual means $\mu_{rim}$ across all observations. $\bm{Q}_{ri} = \bm{R}_{ri}^2$ represents the element-wise squared residuals. Therefore, by computing the partial derivatives with respect to the parameters, we establish that the log-likelihood function is differentiable with respect to the parameter vector $\bm{\theta_{ri}}$.

\paragraph{\textbf{Positive definiteness of the Fisher Information Matrix}}
Based on the properties of the Fisher Information matrix for a normal distribution, the Fisher Information matrix $F(\bm{\theta_{r,i}})$ can then be represented as:
\begin{align*}
\bm{F}(\bm{\theta}_{ri}) &= -E(\frac{\partial \ln{L(\bm{\theta}_{ri})}}{\partial \bm{\theta}_{ri}} \cdot \frac{\partial \ln{L(\bm{\theta}_{ri})}}{\partial \bm{\theta}_{ri}^T}) \\
&= \begin{bmatrix}
\bm{I}_{\beta\beta^T} & \bm{I}_{\beta\gamma^T} \\
\bm{I}_{\gamma\beta^T} & \bm{I}_{\gamma\gamma^T}
\end{bmatrix} = 
\begin{bmatrix}
\bm{I}_{\beta\beta^T} & \bm{0} \\
\bm{0} & \bm{I}_{\gamma\gamma^T}
\end{bmatrix} \\
\bm{I}_{\beta\beta^T} &= \bm{Z}_{ri}^T \bm{S}_{ri} \bm{Z}_{ri} \\
\bm{I}_{\gamma\gamma^T} &= \bm{Z}_{ri}^T \bm{C}_{ri} \bm{Z}_{ri},
\end{align*}
where $\bm{C}_{ri}$ is a $n$ $\times$ $n$ diagonal matrix with each diagonal entry equal to $\frac{1}{2}$. Under Assumption 2, where the variables are linearly independent, $\bm{x}_{ri}$ has full rank. As a result, the Fisher Information submatrices $\bm{I}_{\beta\beta^T}$ and $\bm{I}_{\gamma\gamma^T}$ are positive definitie. Consequently, the entire Fisher Information matrix is positive definite. By verifying that the regularity conditions are satisfied and applying Theorem 1, the consistency of the MLE for the road travel time model is established. 
\end{proof}

For a new observation with given values of the rain indicator ($X_{ri1}$), peak hour indicator ($X_{ri2}$), weekday indicator ($X_{ri3}$), and traffic indicator ($X_{ri4}$), the road travel time is predicted by evaluating the predicted mean of $Y_{ri}$. Let the MLE of the parameters be denoted by $\hat{\bm{\beta}}_{ri}$ and $\hat{\bm{\gamma}}_{ri}$. The predicted road travel time $\hat{\Delta t}_{ri}$ is the predicted median of the log-normal distribution (the exponential of the predicted mean of the normal distribution):
\begin{equation}
    \hat{\Delta t}_{ri} = \text{exp}(\hat{\mu}_{ri}) = \text{exp}(\bm{\beta}_{ri}^T\bm{x}_{ri}).
\end{equation}
Where $\bm{x}_{ri}$ = [$x_{ri1}, x_{ri2}, x_{ri3}, x_{ri4}]^T$ represents the vector corresponding to a single observation of the four independent variables.

In this study, the traffic indicator is a variable that can be updated in real-time to reflect the bus's real-time traffic conditions on the road. Specifically, as GTFS Realtime data is reported approximately every 15-20 seconds, the space mean speed, $v_k$, can be subsequently updated. When the speed $v_k$ on the link between stops $i-1$ and $i$ falls below the predefined threshold $V$, the traffic indicator switches from zero to one, and conversely, it switches from one to zero when the speed rises above $V$. The model generates new predictions in real time whenever the traffic indicator is updated.

In addition to providing point estimates for road travel times, the model is also capable of generating uncertainty bounds (confidence intervals) for these estimates. The confidence intervals are based on the asymptotic normality of the estimator, with a covariance matrix derived from the Fisher Information Matrix for $\bm{\theta}_{ri}$. As outlined in the proof of Proposition 1, the Fisher Information is defined as:
\begin{equation}
F(\bm{\theta}_{ri}) = -E(\frac{\partial \ln{L(\bm{\theta}_{ri})}}{\partial \bm{\theta}_{ri}} \cdot \frac{\partial \ln{L(\bm{\theta}_{ri})}}{\partial \bm{\theta}_{ri}^T})
\end{equation}
Let $\bm{A}=\begin{bmatrix} \bm{I_5} & \bm{0} & \bm{0} & \bm{0} & \bm{0} & \bm{0} \end{bmatrix}$, where $\bm{0}$ is a 5 $\times$ 1 vector, and $\bm{A}$ is a 5 $\times$ 10 matrix. Then, $\hat{\mu}_{ri}$ asymptotically follows the distribution:
\begin{equation}
    \hat{\mu}_{ri} \xrightarrow{\text{d}} N(\mu_{ri}, \frac{[1, \bm{x}_{ri}^T]\bm{A}\bm{F}^{-1}(\bm{\theta_{r,i}})\bm{A^T}[1,\bm{x}_{ri}^T]^T}{n_r}),
\end{equation}
where $n_r$ is the number of observations in the road travel time dataset. It is important to emphasize that the uncertainty bounds for $\mu_{ri}$ depend on the specific combination of covariates. The 95\% confidence interval, which serves as the uncertainty bound for the link travel time under different combinations of covariates, can be estimated. The 95\% confidence interval is defined as:
\begin{align}
    [\exp{(\mu_{ri}-1.96 \sigma(\hat{\mu}_{ri}))}, \exp{(\mu_{ri}+1.96 \sigma(\hat{\mu}_{ri}))}].
\end{align}
Note that $\sigma(\hat{\mu}_{ri}))$ represents the standard deviation of the distribution defined in Equation (12).

\subsection{Stop Dwell Time and Intersection Waiting and Passing Time Prediction}
It is important to highlight that drivers may skip stops in cases of bus delays or when no passengers are boarding or alighting. As a result, the distribution of stop dwell times may exhibit one peak at zero and another peak corresponding to typical dwell times when stops are not skipped. This bimodal behavior makes it unsuitable to fit the stop dwell time using a single unimodal distribution. Consequently, stop dwell times are predicted directly based on their empirical distribution. In contrast, intersection waiting and passing times are assumed to follow a log-normal distribution, which is characterized by its asymmetry and right-skewness. Log-normal distributions are fitted to the intersection waiting and passing times using the parameters $\mu_{si}$ and $\sigma_{si}$. The predicted intersection waiting and passing times are then obtained by taking the exponential of $\hat{\mu}_{si}$. 

\subsection{Markovian Framework}
Consider a transit network comprising $p$ links, at any discrete time point $q$, the state of the bus running a link can be encapsulated by the state vector
\begin{equation}
\bm{Z}_{q} = [z_{q1},\dots, z_{qp}].
\end{equation}
Each component $z_{qi}$ within the state vector $\bm{Z_{q}}$ is binary, indicating the bus running on link $i$ \citep{zhao2017modeling}. Specifically, when the bus is traversing link $i$ at time point $q$, $z_{qi}$ is set to 1, with all other components set to 0. Thus, the state vector $\bm{Z_{q}}$ can be represented as $\bm{Z_{q}}$ = $[0,\dots,1,\dots,0]$, where the 1 occurs at the $i_{th}$ position. The sequence of the Markov process is
\begin{equation}
P(\bm{Z_{q+1}}|\bm{Z_{0}},\bm{Z_{1}},\dots,\bm{Z_{q}})=P(\bm{Z_{q+1}}|\bm{Z_{q}}).
\end{equation}
Therefore, the probability of moving to the next state at time step $q$+1 depends only on the present state at time step $q$ and not on the previous state.

A bus route comprising $p$ links can be fully characterized by a Markov process, which is effectively captured through a transition matrix $\bm{P}$ of dimensions $p$-by-$p$, where transition probabilities are derived from the predicted road travel times. This Markovian framework, combined with the distributions of stop dwell time and intersection waiting/passing times, enables the simulation of bus movement along its route. A key contribution of this study is the development of an approach that enables real-time prediction of both the remaining travel time to each future stop and the associated uncertainty range at each stop. The process for defining the transition probabilities within the Markov framework, as well as the estimation methods for stop dwell, and intersection waiting and passing times, are detailed as follows.

As depicted in Figure 2, at a given time step $\tau$, a bus situated on link $BE$ has the possibility to either remain within link $BE$ or proceed to stop $E$ and then transition to the adjacent link $EH$ at time $\tau$ + 1. Given that the bus is constrained from diverting to alternative links, in the $i_{th}$ row of the transition probability matrix, only two probabilities, $(p_{i,i})_{\tau}$ (the probability of staying on the current link) and $(p_{i,i+1})_{\tau}$ (the probability of moving to the next link), hold non-zero values, while all other probabilities in this row are set to zero. Hence, $(p_{i,i})_{\tau}$ + $(p_{i,i+1})_{\tau}$=1. Once $(p_{i,i})_{\tau}$ is determined, $(p_{i,i+1})_{\tau}$ is automatically set to one minus $(p_{i,i})_{\tau}$. Therefore, the primary focus is on estimating $(p_{i,i})_{\tau}$. 

\begin{figure}[!ht]
  \centering
  \includegraphics[width=0.6\textwidth]{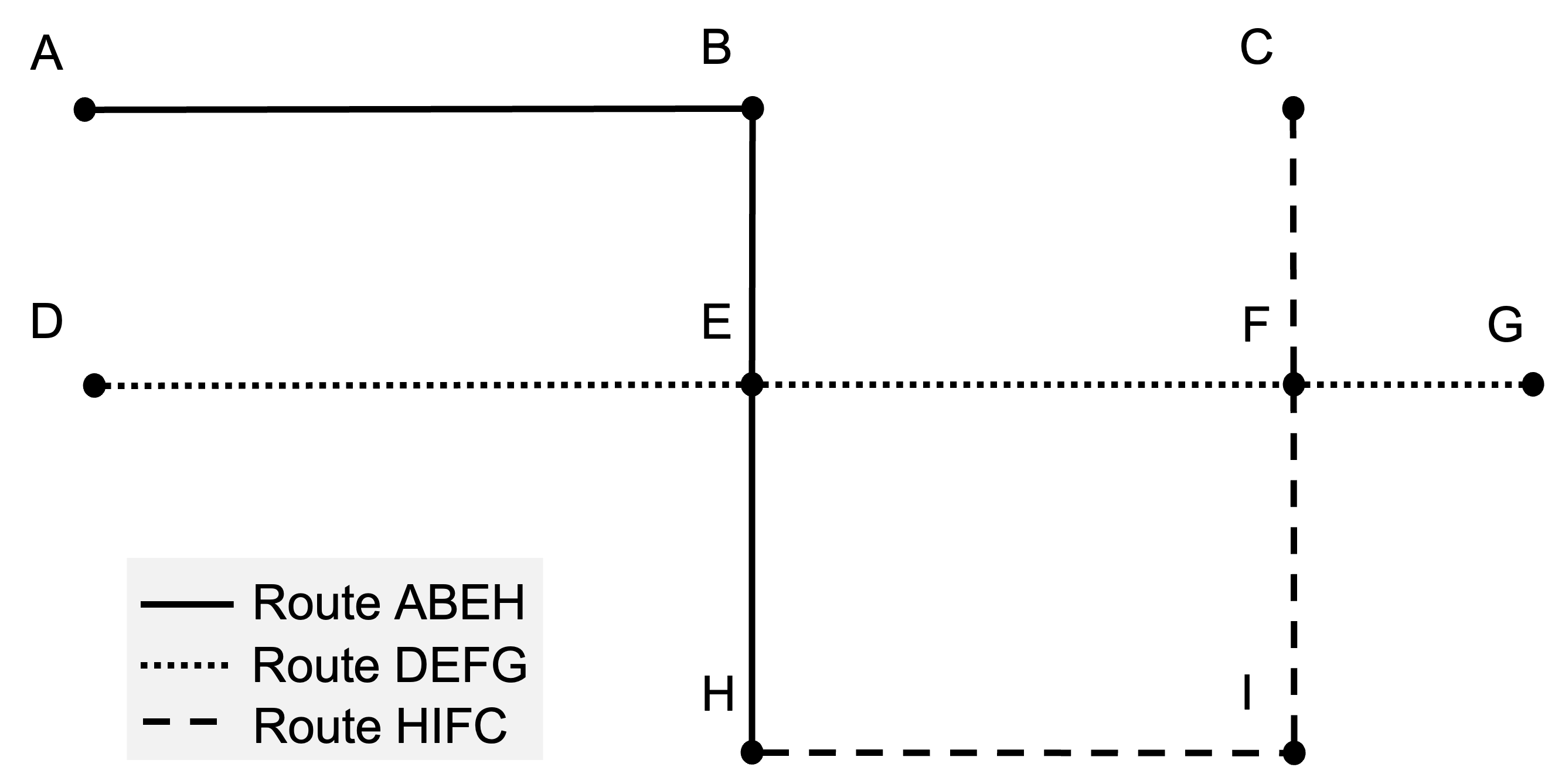}
  \caption{Transit Network}\label{fig: transit network}
\end{figure}

We first apply the method proposed in the previous step to predict road travel times using the training dataset. Given the known link lengths, these predictions are then used to estimate bus speed $v_{ri}$ on link $i$ and subsequent links while ensuring that all covariates remain consistent with those extracted at time $\tau$. Given this, the number of time steps required to complete the link, denoted by $S_i$, is calculated as $S_i$ = $\frac{d_i-d_{\tau}}{\Delta Tv_{ri}}$. Similarly, for subsequent links, $S_i$ is determined by dividing the link length by the predicted travel distance within each $\Delta T$ interval. The corresponding estimate for the probability $\hat{p}_{i,i}$ of the bus remaining on the current link $i$ is then derived as follows:
\begin{equation}
\hat{p}_{i,i} = \frac{S_i-1}{S_i}
\end{equation}

Then road travel time on link $i$ is calculated as $f_{i}\Delta T$ where $f_i$ represents the number of time units the bus remains on link $i$, directly obtained from the Markov simulation results. To simulate the stop dwell time or intersection waiting and passing time, we generate these times randomly based on their fitted distributions. The simulated dwell time $\Delta t_{di}$ at stop $i$ and waiting and passing time time $\Delta t_{sj}$ at intersection $j$ are then summed with the predicted road travel time to calculate the total remaining time $\Delta t_{li}$ to reach stop $i$. To ensure robustness, this simulation is repeated $M$ times, with the mean value of $\Delta t_{li}$ across all simulations used as the predicted total remaining time. 

We proceed by establishing the identifiability of the transition matrix $\bm{P}$ under the following conditions:

\begin{proposition}
Given that Conditions 5 is met, the transition matrix $\bm{P}$ is identifiable.
\end{proposition}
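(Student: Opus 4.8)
The plan is to show that the transition matrix $\bm{P}$ is identifiable not by inverting any map back to the regression coefficients, but by exhibiting $\bm{P}$ as a \emph{deterministic function of quantities that are themselves already identifiable}. Concretely, identifiability of $\bm{P}$ means that any two data-generating configurations that induce the same law of the observed process must produce the same $\bm{P}$; it therefore suffices to write each entry of $\bm{P}$ as a single-valued function of the regression parameters $\bm{\theta}_{ri} = [\bm{\beta}_{ri}, \bm{\gamma}_{ri}]^T$ (whose identifiability is secured by Proposition 1 under the stated condition) together with quantities that are observed rather than estimated.

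First I would make the dependence explicit by composing the construction steps already laid out above. For a fixed time step $\tau$ with known covariate vector $\bm{x}_{ri}$, the predicted road travel time is $\hat{\Delta t}_{ri} = \exp(\bm{\beta}_{ri}^T \bm{x}_{ri})$; dividing the known link length by this time yields the link speed $v_{ri}$, the step count is $S_i = (d_i - d_\tau)/(\Delta T\, v_{ri})$, and the diagonal transition probability is $\hat{p}_{i,i} = (S_i - 1)/S_i$ with off-diagonal entry $\hat{p}_{i,i+1} = 1 - \hat{p}_{i,i}$ and all remaining entries zero. Chaining these, each nonzero entry of $\bm{P}$ is a composition
\[
\bm{\beta}_{ri} \;\longmapsto\; \exp(\bm{\beta}_{ri}^T \bm{x}_{ri}) \;\longmapsto\; v_{ri} \;\longmapsto\; S_i \;\longmapsto\; \hat{p}_{i,i}
\]
of elementary continuous operations (an exponential, reciprocals, and affine maps), so the map $g:\bm{\theta}_{ri}\mapsto \bm{P}$ is well-defined and single-valued on each row, and the link lengths $d_i$, the offset $d_\tau$, the interval $\Delta T$, and the covariates $\bm{x}_{ri}$ enter only as known constants.

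The identifiability conclusion then follows by a composition argument. Suppose two parameter configurations yield the same distribution of the observables used to fit the model. By the identifiability of $\bm{\theta}_{ri}$ furnished by Proposition 1, they must share the same true value $\bm{\theta}_0 = [\bm{\beta}_0, \bm{\gamma}_0]^T$; applying the deterministic map $g$ gives $\bm{P} = g(\bm{\theta}_0)$ in both cases, so the transition matrix they induce is identical. Hence distinct transition matrices cannot arise from configurations with the same observational law, which is precisely identifiability of $\bm{P}$. Note that injectivity of $g$ is \emph{not} required here — we only need $g$ to be a function, so the argument runs in the forward direction from identifiable parameters to the induced matrix.

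The main obstacle I anticipate is not the composition step but verifying that $g$ is genuinely well-defined as a map into valid row-stochastic matrices. The exponential guarantees $\hat{\Delta t}_{ri} > 0$ and hence a finite positive speed, but one must still confirm that $S_i > 1$ so that $\hat{p}_{i,i} \in [0,1)$ is an admissible probability and that $\hat{p}_{i,i} + \hat{p}_{i,i+1} = 1$ holds row-wise; this requires the link length to exceed the per-step travel distance, which should be argued from the construction (or folded into the stated condition). A secondary point is to pin down precisely which observational law identifiability is referenced against — here the road-travel-time likelihood through which $\bm{\theta}_{ri}$ is identified — so that the premise of the composition argument matches the guarantee of Proposition 1.
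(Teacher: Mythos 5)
Your proposal is correct and follows essentially the same route as the paper: identifiability of $\bm{\theta}_{ri}$ (from Proposition 1) is pushed forward through the deterministic construction chain of Equations (10) and (16) to the diagonal entries $p_{i,i}$, with off-diagonal entries then determined as $1-p_{i,i}$. The well-definedness obstacle you flag (ensuring $S_i>1$ so the entries are admissible probabilities) is handled in the paper exactly as you suggest folding it in, namely by requiring the time unit $\Delta T$ to be chosen smaller than the estimated travel time across any link on the route.
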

\begin{condition}
The parameters $\bm{\theta_{ri}}$ for each link $i$ are identifiable.
\end{condition}

\begin{proof}{Proof}
Condition 5 is guaranteed by Proposition 1. According to this condition, the parameters $\bm{\theta_{ri}}$ for each link $i$ are identifiable. Then the road travel time for each link can be estimated via Equation (10). By selecting an appropriate time unit $\Delta T$, such that $\Delta T$ is less than the estimated travel time across any specific link on the bus route, the diagonal entries of the transition matrix $\bm{P}$ for the links along the route are identifiable, according to Equation (16). Since each off-diagonal entry $p_{i,i+1}$ can be directly derived from the corresponding diagonal entry $p_{i,i}$, the entire transition matrix $\bm{P}$ is identifiable.
\end{proof}

The expected time interval for the bus to reach stop $i$ is determined using the 2.5th and 97.5th percentiles of the $M$ Markov simulation results.

\section{Empirical Analysis}
\subsection{Study Areas}
For the empirical analysis, three bus routes in Gainesville, Florida, were selected. These routes were then used to evaluate the proposed model’s performance in predicting travel times and quantifying uncertainties. Gainesville was chosen due to its well-established public transportation system, which includes 38 strategically planned bus routes covering the city’s key network links in both directions. The first two routes represent the two directions of Route 1: one heading towards the downtown station and the other towards Butler Plaza, a major commercial hub. The third route is Route 75, which leads to the Oaks Mall, located away from the center of Gainesville and the downtown area. These routes operate at relatively low frequencies, with Route 1 running approximately one bus every 30 minutes and Route 75 running approximately one bus every hour. The selected routes are illustrated in Figure 3.
\begin{figure}[!ht]
  \centering
  \includegraphics[width=0.7\textwidth]{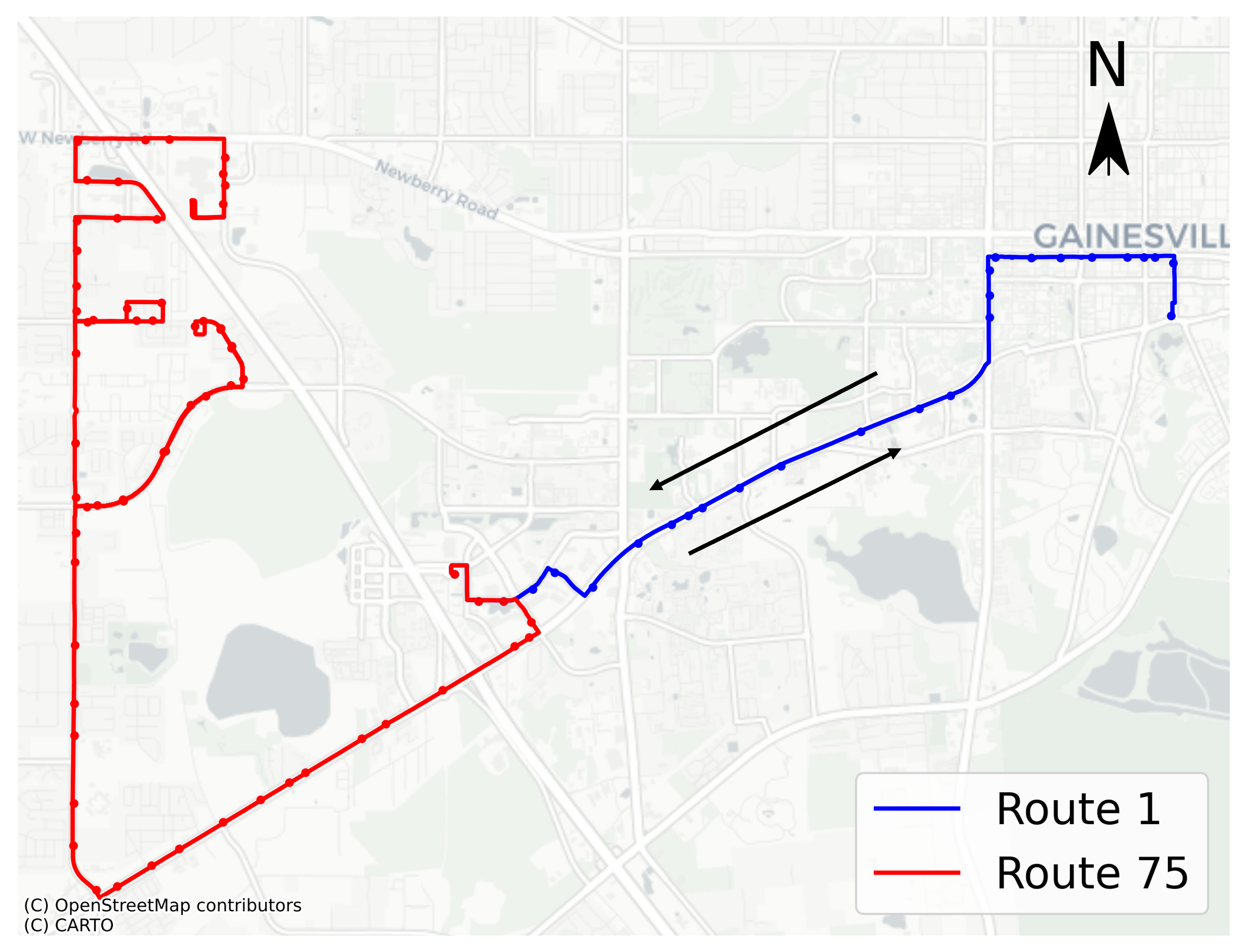}
  \caption{Selected Bus Routes}\label{fig: route graph}
\end{figure}

\subsection{Data Acquisition}
The empirical analysis and proposed methodology were applied and validated using GTFS data, along with historical weather data collected from August 18, 2023, to October 15, 2023, in Gainesville, Florida. Specifically, the GTFS data comprises both static and real-time components. The static data, which includes information about bus routes and stops, was obtained from the Gainesville Department of Transportation \citep{GainesvilleStaticGTFSData}, while the GTFS Realtime data was sourced from the BusTime Developer API \citep{bustime2022}. The real-time records, updated every 15-20 seconds, cover observation hours ranging from 6 a.m. to 11 p.m. Additionally, historical weather data was downloaded from OpenWeather (https://openweathermap.org/). The complete dataset was used for the empirical analysis. For the model performance evaluation, the data were subsequently divided into two sets: data from August 18th to October 8th were used to train the proposed model, and data from the last week (October 9th to October 15th) were used for validation purposes. In the study, the location and length of each bus link were derived from GTFS static data. Link travel times, as well as factors such as peak hour, weekend, and traffic indicators, were derived from GTFS Realtime data. In addition, weather indicators were extracted from historical weather data.

\subsection{Statistical Analysis of Empirical Distributions}
Statistical tests are performed on the empirical distributions of road travel time and intersection waiting and passing time to assess the validity of the assumptions underlying the proposed methodology. The test results for the time components of five randomly selected links, each with varying characteristics such as link length, road type, and the presence of intersections, are presented. Two intersections are present among these five links. The key characteristics of the five links are as follows:
\begin{itemize}
    \item Link1: Route 1, Begin at the starting station, follow local roads, and have no signal lights along the link.
    \item Link2: Route 1, Located on local roads, with a signal light present on the link.
    \item Link3: Route 1, Asteroids, with a signal light present on the link.
    \item Link4: Route 1, Situated on local roads in the opposite direction, a very short link with no signal lights present.
    \item Link5: Route 75, Located on state road, with no signal light present on the link.
\end{itemize}

\subsubsection{Evaluation of Log-Normal Distribution Assumption}
Road travel times and intersection waiting and passing times are assumed to follow a log-normal distribution, characterized by a long tail and right-skewness. To validate this assumption, the Kolmogorov-Smirnov (K-S) test is applied. The null hypothesis for the K-S test states that the respective time component follows a log-normal distribution. The test results are displayed in Table 1.

\begin{table}[!ht]
\centering
\begin{tabular}{lll}
\hline
                                            & K-S Statistics & p-value \\ \hline
\textbf{Road Travel Time}                   &                &         \\
Link 1                                      & 0.035          & 0.559   \\
Link 2                                      & 0.046          & 0.235   \\
Link 3                                      & 0.039          & 0.418   \\
Link 4                                      & 0.054          & 0.108   \\
Link 5                                      & 0.027          & 0.833   \\ \hline
\textbf{Intersection Waiting and Passing Time} &                &         \\
Intersection 1                              & 0.039          & 0.433   \\
Intersection 2                              & 0.044          & 0.283   \\ \hline
\end{tabular}
\caption{K-S Test Results}
\end{table}

The $p$-values are all greater than 0.05, indicating that the null hypothesis cannot be rejected at the 0.05 significance level for the time components of all selected links and intersections. This supports the assumption that road travel times and intersection waiting and passing times follow a log-normal distribution.

\subsubsection{Evaluation of Heteroscedasticity Assumption}
Heteroscedasticity is assumed in the proposed log-normal distribution model. To assess the validity of this assumption, the Breusch-Pagan (BP) test is conducted on the log-transformed road travel times for the five selected links. The BP test examines whether the variance of the residuals is dependent on one or more independent variables. The null hypothesis posits that the residuals exhibit constant variance (homoscedasticity). The results of the BP test for the selected links are summarized in Table 2.

\begin{table}[H]
\centering
\begin{tabular}{lll}
\hline
       & LM Statistics & p-value          \\ \hline
Link 1 & 22.040        & \textless{}0.001 \\
Link 2 & 30.093        & \textless{}0.001 \\
Link 3 & 23.022        & \textless{}0.001 \\
Link 4 & 10.241        & 0.037            \\
Link 5 & 21.288        & \textless{}0.001 \\ \hline
\end{tabular}
\caption{Breusch-Pagan Test Results}
\end{table}

All p-values are below the 0.05 significance threshold, with four links exhibiting p-values less than 0.001. These results indicate that the null hypothesis of constant variance (homoscedasticity) is rejected for all five links, providing strong evidence of the presence of heteroscedasticity.

\subsubsection{Evaluation of Independence Assumption}
The road travel times are assumed to be independent over a broader temporal framework. To evaluate this assumption, the Runs Test is employed, which assesses the randomness or independence of a sequence of observations. The null hypothesis for the Runs Test posits that the sequence is random. For the application of the test, road travel times on on any link $i$ between adjacent stops $i$-1 and $i$ are sorted by the departure time from stop $i$-1. The results of the Runs Test are presented in Table 3.

\begin{table}[H]
\centering
\begin{tabular}{lll}
\hline
       & Z-Statistics & p-value \\ \hline
Link 1 & -1.553       & 0.120   \\
Link 2 & -0.689       & 0.491   \\
Link 3 & -1.001       & 0.317   \\
Link 4 & -1.832       & 0.067   \\
Link 5 & -1.726       & 0.084   \\ \hline
\end{tabular}
\caption{Runs Test Results}
\end{table}

The results show that the p-values are greater than 0.05 for all links, indicating that the assumption of randomness in the sequence cannot be rejected at the 0.05 significance level. This supports the validity of the independence assumption for the road travel times. However, for two links, the null hypothesis cannot be rejected at the 0.1 significance level, suggesting that short-term temporal correlations or trends may marginally exist. Nonetheless, these results are overall reasonable given the broader time framework under consideration, where minor deviations from strict randomness do not significantly undermine the assumption of independence.

\section{Results and Discussion}
\subsection{Road Travel Time Prediction with Uncertainties}
Tables 4 and 5 present the comparative results of road travel time predictions and the corresponding uncertainty bounds across various methods, including the proposed approach and several baseline models. A detailed description of the baseline models and performance evaluation metrics can be found in Appendix A and Appendix B, respectively. In these tables, LN-MLE refers to the proposed method, which predicts road travel times by modeling a log-normal distribution and estimating its parameters using the MLE technique, HM stands for the Historical Mean model, LR represents the Linear Regression model, DT refers to Decision Tree, and MLP indicates the Multi-Layer Perceptron model. Additionally, the column labeled BW in Table 5 denotes the width of the uncertainty bounds. It is important to highlight that the 95\% confidence intervals for all methods are calculated based on the most frequently occurring combinations of covariates. 

\begin{table}[H]
\footnotesize
\centering
\begin{tabular}{lllllllllll}
\hline
       & \multicolumn{2}{c}{LN-MLE}        & \multicolumn{2}{c}{HM} & \multicolumn{2}{c}{LR} & \multicolumn{2}{c}{DT} & \multicolumn{2}{c}{MLP} \\ \hline
       & MAE             & RMSE            & MAE        & RMSE      & MAE        & RMSE      & MAE        & RMSE      & MAE        & RMSE       \\ \hline
Link 1 & \textbf{13.068} & \textbf{17.203} & 23.851     & 28.122    & 14.338     & 18.368    & 14.372     & 18.629    & 14.306     & 18.342     \\
Link 2 & \textbf{5.935}  & \textbf{11.293} & 10.853     & 17.075    & 6.782      & 12.125    & 6.612      & 11.913    & 6.763      & 12.089     \\
Link 3 & \textbf{42.555} & \textbf{55.639} & 46.708     & 59.725    & 43.971     & 56.626    & 43.687     & 56.371    & 44.035     & 56.536     \\
Link 4 & \textbf{2.821}  & \textbf{3.604}  & 3.982      & 4.609     & 3.021      & 3.985     & 3.019      & 3.961     & 3.062      & 3.987      \\
Link 5 & \textbf{11.040} & \textbf{18.513} & 27.326     & 34.220    & 11.837     & 19.140    & 12.044     & 19.100    & 11.955     & 19.163     \\ \hline
\end{tabular}
\caption{Comparison of Prediction Results between the Proposed Model and Baseline Models}
\end{table}

\begin{table}[H]
\small
\centering
\begin{tabular}{lllllll}
\hline
       & \multicolumn{2}{c}{LN-MLE}            & \multicolumn{2}{c}{HM}          & \multicolumn{2}{c}{LR}          \\ \hline
       & 95\%CI               & BW             & 95\%CI                 & BW     & 95\%CI                 & BW     \\ \hline
Link 1 & {[}25.227, 27.318{]} & \textbf{2.091} & {[}26.628, 29.324{]}   & 2.696  & {[}26.617, 29.336{]}   & 2.719  \\
Link 2 & {[}40.801, 47,459{]} & \textbf{6.757} & {[}39.658, 53.260{]}   & 13.601 & {[}33.872, 47.881{]}   & 14.009 \\
Link 3 & {[}75.512, 80.877{]} & \textbf{5.364} & {[}152.913, 165.507{]} & 12.594 & {[}152.624, 164.761{]} & 12.138 \\
Link 4 & {[}8.993, 9.662{]}   & \textbf{0.669} & {[}9.800, 10.725{]}    & 0.925  & {[}9.445, 10.835{]}    & 1.389  \\
Link 5 & {[}26.397, 29.347{]} & \textbf{2.950} & {[}24.379, 31.424{]}   & 7.005  & {[}25.693, 31.072{]}   & 5.379  \\ \hline
\end{tabular}
\caption{Comparison of Uncertainty Bounds between the Proposed Model and Baseline Models}
\end{table}

The results indicate that, although the proposed model does not significantly outperform the baseline models, it consistently achieves superior performance across all links. In contrast, the baseline methods exhibit varying performance rankings depending on the specific links. This consistent and satisfactory performance is particularly valuable for predicting road travel times, which are inherently stochastic in nature. The finding highlight the proposed model's ability to reliably and effectively capture the underlying patterns and dynamics of road travel times.

The comparison of the 95\% confidence intervals and their corresponding widths between the proposed model and the baseline models reveals that the proposed model consistently generates narrower uncertainty bounds. This indicates that the proposed model provides predictions with a higher level of confidence in their accuracy compared to the baseline methods. Moreover, the results highlight the model's improved capability to effectively capture and quantify the inherent variability in the data.

\subsection{Factor Contributions to Road Travel Time Prediction}
Table 6 presents the coefficients of each factor in the road travel time distribution model.  In the table, *** denotes a $p$-value $< 0.001$, while * denotes a $p$-value $< 0.05$, indicating statistical significance.

\begin{table}[H]
\centering
\begin{tabular}{lllllll}
\hline
 &  & Link 1 & Link 2 & Link 3 & Link 4 & Link 5 \\ \hline
\multicolumn{1}{c}{\multirow{4}{*}{$\mu_{ri}$}} & Const. & \textbf{3.314***} & \textbf{3.716***} & \textbf{4.278***} & \textbf{2.257***} & \textbf{3.335***} \\
\multicolumn{1}{c}{} & Rain & 0.027 & \textbf{0.027*} & -0.003 & -0.006 & -0.003 \\
\multicolumn{1}{c}{} & Peak Hour & \textbf{0.096***} & \textbf{0.044***} & \textbf{0.062***} & \textbf{0.067*} & \textbf{0.092***} \\
\multicolumn{1}{c}{} & Weekday & 0.046 & \textbf{0.067*} & 0.081 & -0.025 & -0.009 \\
\multicolumn{1}{c}{} & Traffic & \textbf{0.961***} & \textbf{0.613***} & \textbf{0.732***} & \textbf{1.923***} & \textbf{1.106***} \\ \hline
\multirow{4}{*}{$\sigma_{ri}^2$} & Const. & \textbf{-2.461***} & \textbf{-3.344***} & \textbf{-3.302***} & \textbf{-2.104***} & \textbf{-2.470***} \\
 & Rain & -0.015 & 0.148 & -0.106 & -0.146 & \textbf{0.272*} \\
 & Peak Hour & -0.110 & -0.030 & \textbf{0.173*} & 0.123 & \textbf{0.269*} \\
 & Weekday & -0.028 & \textbf{-0.502*} & 0.135 & -0.221 & \textbf{-0.703*} \\
 & Traffic & \textbf{0.404***} & \textbf{1.067***} & \textbf{1.118***} & \textbf{0.314***} & \textbf{0.250*} \\ \hline
\end{tabular}
\caption{Factor Contributions to the Road Travel Time Distribution for Each Link}
\label{tab:factor_contributions}
\end{table}

Table 6 demonstrates that the factors impacting the parameters of road travel time distributions vary across links with different characteristics. Among the factors, traffic congestion and peak hours emerge as the dominant contributors, with traffic congestion significantly affecting all parameters across all links. For asteroids or state roads, the traffic condition may be more fluctuated during peak hours, making it a significant impacting factor on the variability of road travel time distributions. Conversely, it is always not significant for the variability of road travel times on local roads. On arterial and state roads, traffic conditions tend to fluctuate more during peak hours, making peak hour a key determinant of variability in road travel time distributions. In contrast, on local roads, traffic conditions generally do not significantly impact variability. For some local roads, rain increases road travel times. On some local roads, rain contributes to increased road travel times, potentially due to narrower lanes and the tendency of bus drivers to adopt more risk-averse behavior by reducing speed for safety. Rain can also increase the variability of road travel times, particularly when buses are traveling on state roads. This may be attributed to the higher speed limits on state roads, which cause rain-induced slowdowns to vary significantly among different buses. Weekdays are associated with increased road travel times in urban areas, particularly on local roads. However, they may also reduce road travel time variability on certain links. This is likely due to the more structured traffic flow on weekdays compared to weekends, when irregular travel patterns—such as shopping and recreational activities—introduce greater variability in traffic conditions.

\subsection{Markovian Framework}
To illustrate the effectiveness of the real-time remaining time predictions and the corresponding time range, Figure 4 presents the comparison between the actual remaining time, the time predicted by the historical mean method, and the remaining time along with its potential range predicted by the Markovian framework at four consecutive timestamps. 

\begin{figure}[!ht]
    \centering
    \begin{subfigure}{0.4\textwidth}
        \centering
        \includegraphics[width=\linewidth]{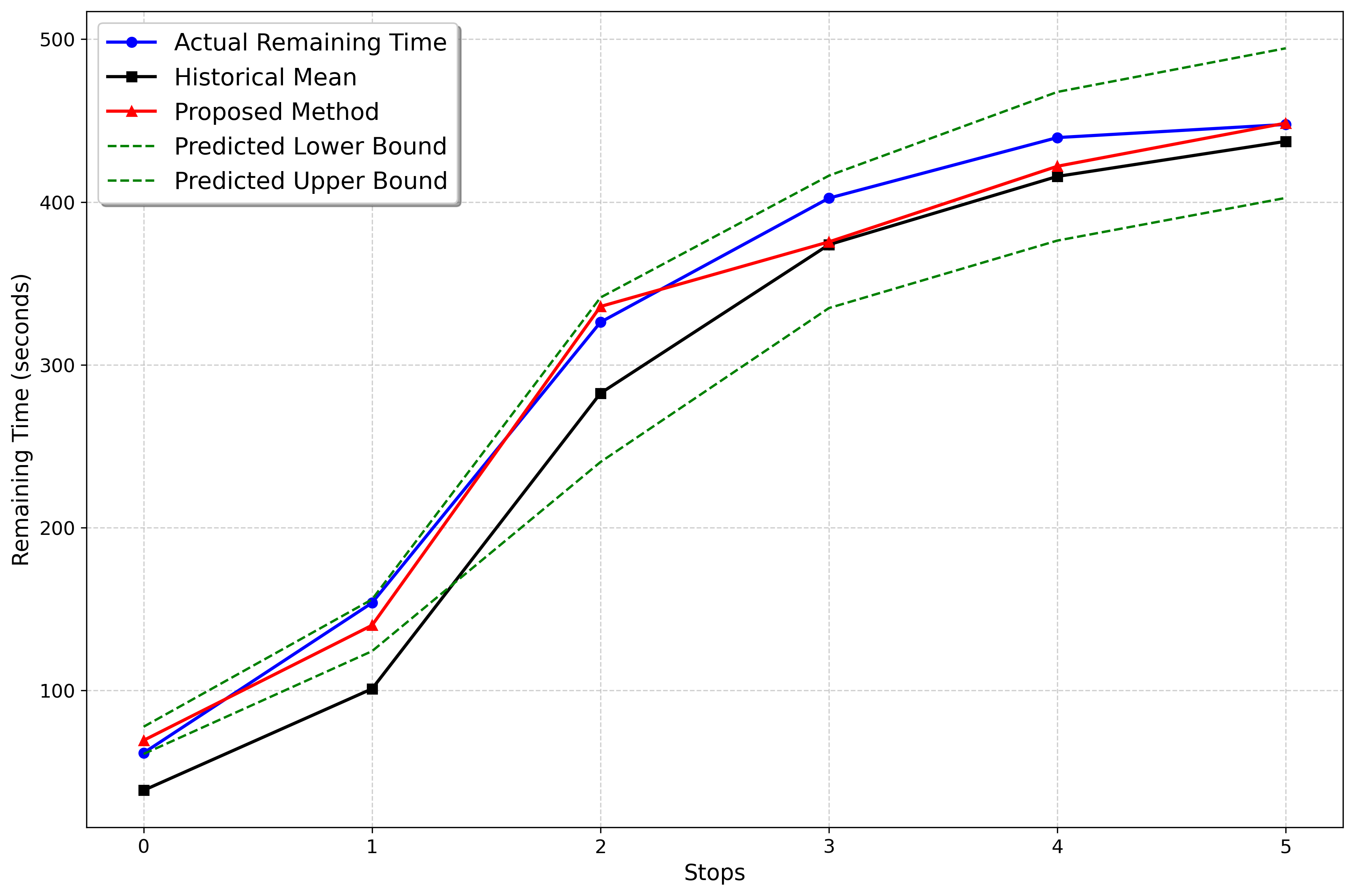}
        \caption{Timestamp 1}
        \label{fig:sub1}
    \end{subfigure}
    \hfill
    \begin{subfigure}{0.4\textwidth}
        \centering
        \includegraphics[width=\linewidth]{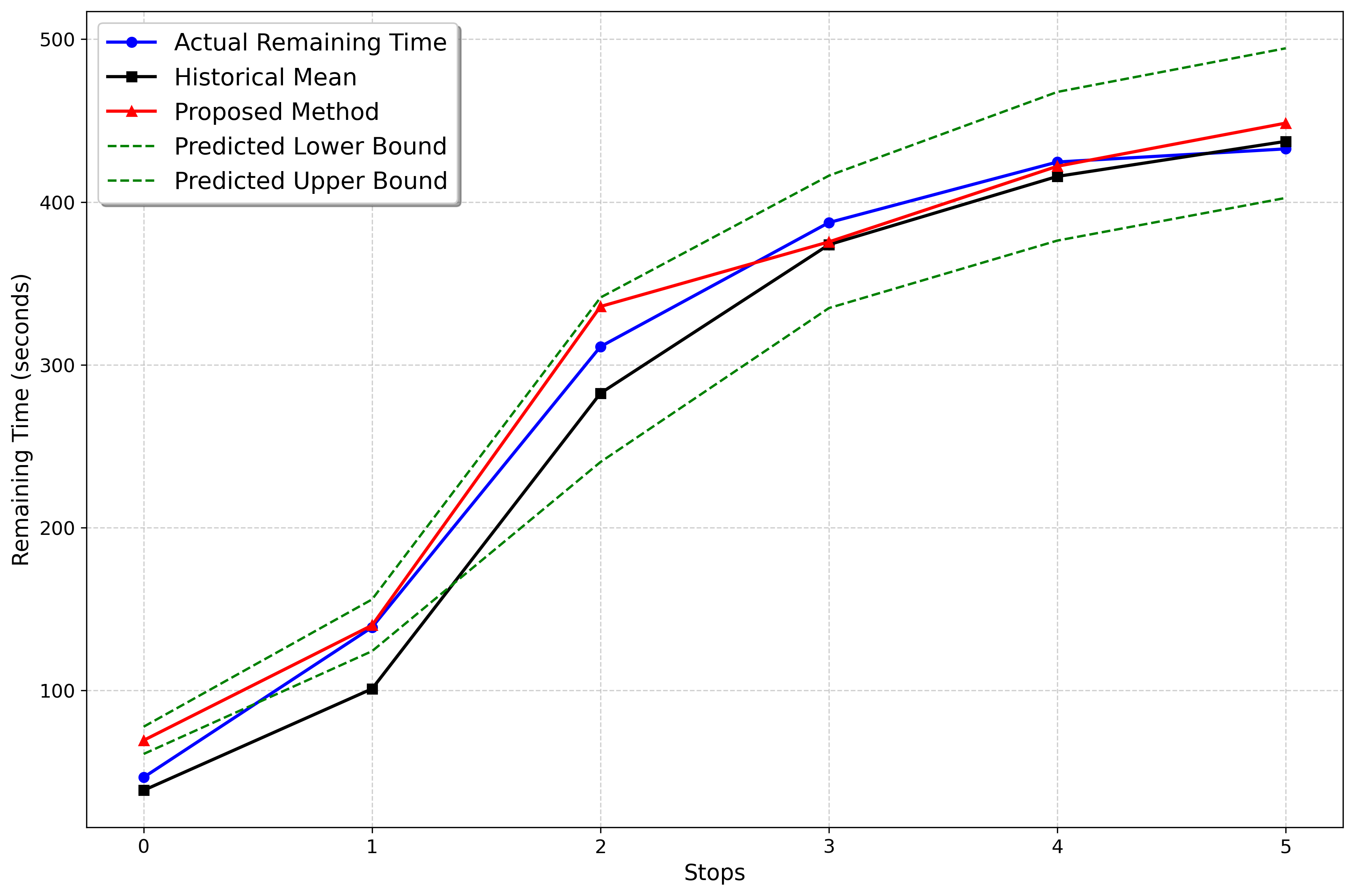}
        \caption{Timestamp 2}
        \label{fig:sub2}
    \end{subfigure}
    
    \begin{subfigure}{0.4\textwidth}
        \centering
        \includegraphics[width=\linewidth]{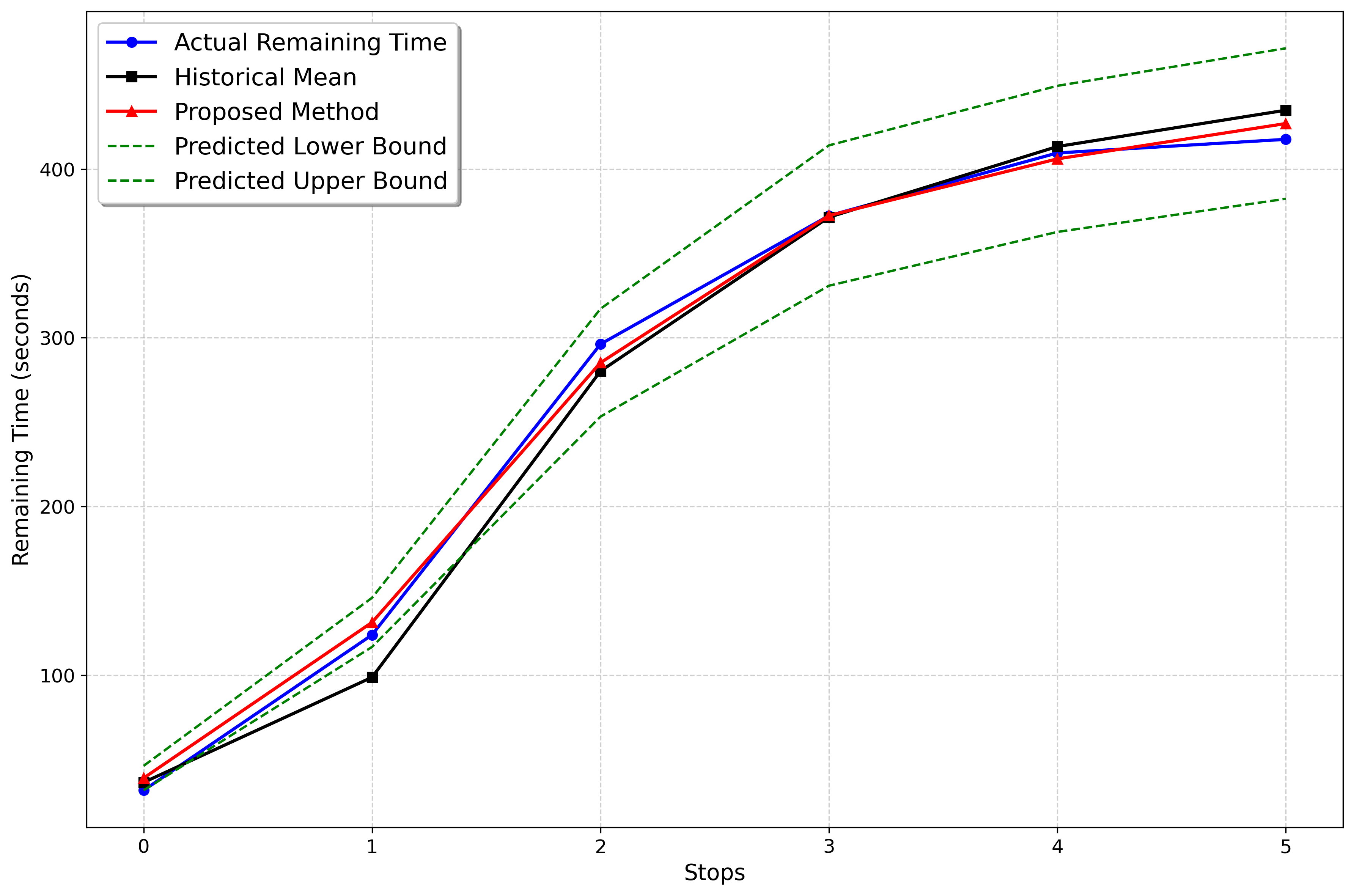}
        \caption{Timestamp 3}
        \label{fig:sub3}
    \end{subfigure}
    \hfill
    \begin{subfigure}{0.4\textwidth}
        \centering
        \includegraphics[width=\linewidth]{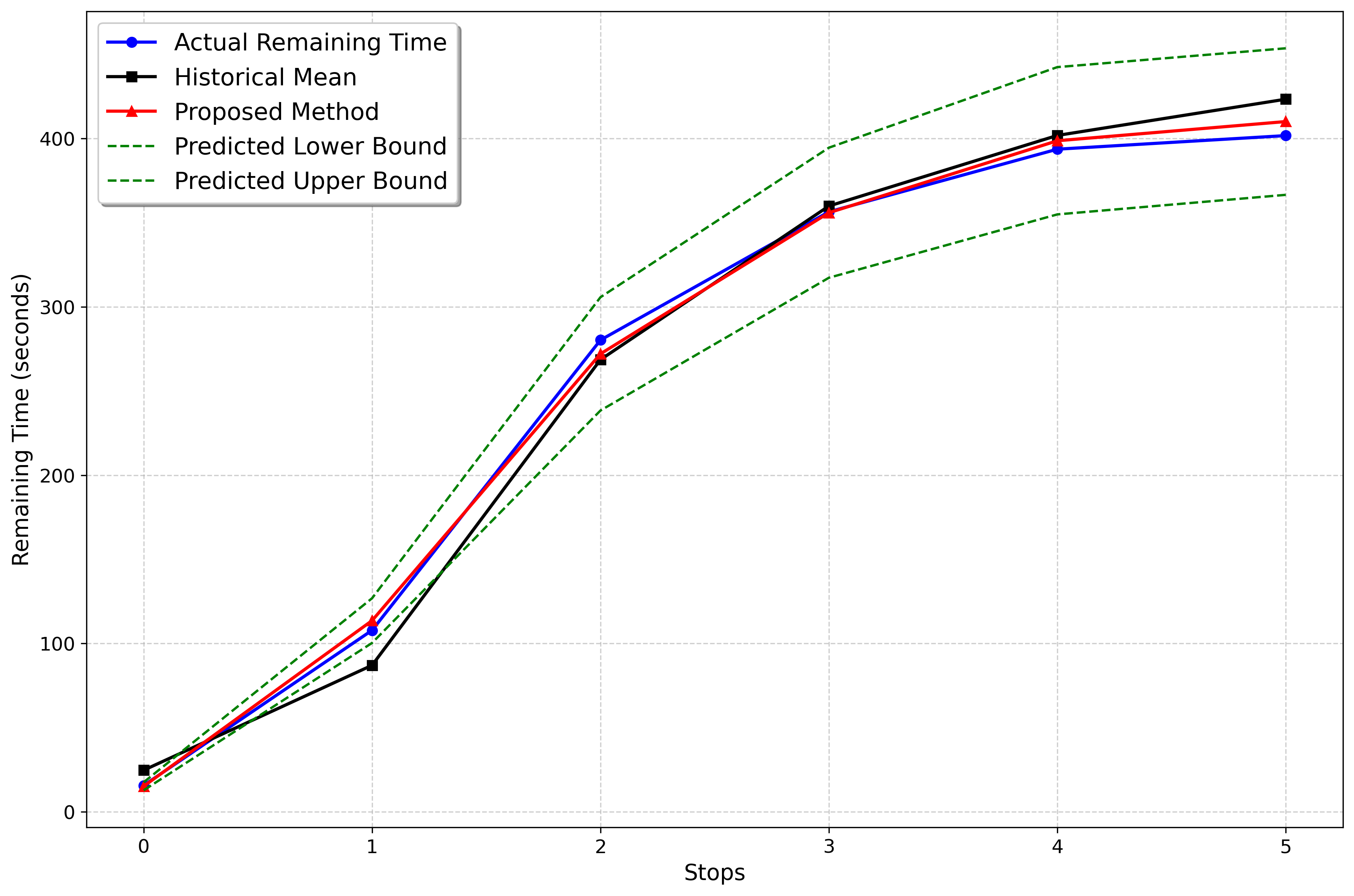}
        \caption{Timestamp 4}
        \label{fig:sub4}
    \end{subfigure}
    \caption{Real-time Remaining Time Prediction Comparison}
    \label{fig:main}
\end{figure}

The results demonstrate that the Markovian framework closely approximates the real remaining time and significantly outperforms the historical mean method by accounting for dynamic traffic conditions. While certain predictions might be affected by the assumption that all links share the same covariates at a given timestamp, the potential time intervals generated by the Markov model reliably encompass the actual remaining time. This indicates the robustness of the system, offering passengers more accurate real-time updates to improve trip planning.

\section{Conclusion}
This study introduces a novel hybrid methodology that integrates the MLE process with a stationary Markov Chain to predict bus link travel times and travel times from any current location to any future stop in real-time, while also quantifying their associated uncertainties. A key contribution is the identification and incorporation of a traffic indicator that signals whether buses are delayed on a link, utilizing only publicly available GTFS data. In the analysis, each link's travel time is divided into three components: road travel time, stop dwell time, and intersection waiting and passing time, with each component modeled independently. It is assumed that road travel time and intersection waiting and passing time follow log-normal distributions, and the parameters for road travel time are modeled as functions of various factors, accounting for heteroscedasticity. MLE, along with FIM, is employed to estimate these parameters, predict road travel times, and quantify uncertainties. The parameters have been demonstrated to be consistent, indicating the reliability of the predictions. Additionally, a stationary Markov model is proposed to predict travel times between any en-route location to any future stop while assess their uncertainties. The transition probabilities are calculated using the MLE results, and the identifiability of the transition matrix further enhances the reliability of the findings. The model's performance is evaluated by comparing it against baseline models using the real-world GTFS dataset. The results indicate that the model achieves tighter uncertainty bounds and high interpretability while consistently maintaining high predictive performance across all links.

The promising results of the model suggest it could serve as a potential tool to facilitate the development of public transportation systems. Firstly, it can act as an assessment tool by estimating uncertainty bounds to reflect system operation performance. Transit agencies can use the model to identify and address inefficiencies within their network, improve scheduling accuracy, and optimize resource allocation. Moreover, by having a reliable measure of performance variability, agencies can better communicate service reliability to the public and make informed decisions about infrastructure investments. Secondly, the model’s capacity to accurately predict travel times in real-time, along with its provision of uncertainty bounds, renders it an ideal tool for trip planning. For instance, by providing ranges for travel times to each stop based on the bus's current location, travelers can optimize their boarding times by aiming to arrive before the lower bound of the range. For practitioners, such as urban planners and transit operators, the model provides a valuable tool for scenario analysis and service planning. They can use it as a transit network digital twin \citep{deng2021digital}, which can conduct simulations and help develop control strategies in real-time. This insight helps in designing more resilient and efficient public transport systems, ensuring that service adjustments meet the needs of a growing urban population. Furthermore, the data-driven nature of the model supports evidence-based decision-making, allowing practitioners to justify service changes or infrastructure projects to stakeholders with concrete predictions and analyses.

The study has certain limitations. First, Stop dwell time may be influenced by various factors, such as on-board and off-board activities, the number of passengers using wheelchairs, and other considerations. Due to the unavailability of hourly ridership data, these factors are not currently incorporated into the methodology. Additionally, intersection waiting and passing times are modeled using a simple distribution, which may not fully capture the complexities of queuing behavior. Future work will investigate the influential factors and complex behaviors associated with stop dwell time and intersection waiting and passing time to provide a more comprehensive understanding of bus travel dynamics. Another limitation of this study is that the Markovian model currently only predicts travel times to stops within the same bus route. Future research will aim to extend this analysis from individual routes to the entire network level, thereby improving the understanding of operational performance across the transit network and assisting travelers in planning their journeys throughout the public transportation system.

\section{Acknowledgments}
This research was supported by the U.S. Department of Transportation (USDOT) through the Center for Equitable Transit-Oriented Communities (CETOC) (Grant No. 69A3552348337). Any opinions, findings, and conclusions or recommendations expressed in this material are those of the authors and do not necessarily reflect the views of USDOT. The authors acknowledged using ChatGPT to check grammar errors and improve languages. The authors reviewed and edited the content as needed and take full responsibility for the content. 

\bibliographystyle{elsarticle-harv} 
\bibliography{cas-refs}

\clearpage
\renewcommand{\thesection}{\Alph{section}}
\begin{appendix}
\section*{Appendix A: Baseline Models}
\label{appendix:A}
\renewcommand{\thesubsection}{A.\arabic{subsection}}  
\paragraph{\textbf{Historical Mean (HM): }}The Historical Mean method predicts specific time values by calculating the average of all observed times within the training dataset. The 95\% confidence interval derived from the historical mean method corresponds to the 2.5\% to 97.5\% quantiles of the travel times within the training set.  
\paragraph{\textbf{Linear Regression (LR): }}The model identifies the relationships between bus running time and various factors by representing the running time as a linear combination of these factors and an error term. The error terms are assumed to be normally distributed with a constant variance. 
\paragraph{\textbf{Decision Tree (DT): }}The Decision Tree regression model is applied to predict bus running time by recursively partitioning the training data into subsets that minimize the impurity in the running time within each resulting partition, until a stopping criterion is met, with each final subset forming a leaf node. To make predictions, the model evaluates the factors of a new observation against the decision tree structure to determine its corresponding leaf. The predicted running time is then the mean of the running times of training instances in that leaf. The decision tree method, as a standalone algorithm, does not inherently provide uncertainty bounds for its predictions.
\paragraph{\textbf{Multi-Layer Perceptron (MLP): }}The model is a type of artificial neural network to predict bus running time. It operates through a series of interconnected layers: an input layer, one or more hidden layers, and an output layer. Each node (or neuron) in a layer is connected to nodes in adjacent layers, with each connection assigned a weight that adjusts during training. This architecture allows the MLP model to learn and capture complex, non-linear relationships between the input covariates and the bus running time. The MLP is a deterministic method and also does not offer uncertainty bounds for its predictions.

\section*{Appendix B: Comparative Metrics for Model Evaluation}
\label{appendix:B}
To evaluate the point estimation performance of the proposed model, Root Mean Square Error (RMSE) and Mean Absolute Error (MAE)  were employed as evaluation metrics. RMSE is defined as:
\begin{equation}
RMSE = \sqrt{\frac{1}{n_t}\sum_{i=1}^{n_t}( t_{i,i+1}^k - \hat{t}_{i,i+1}^k)^2.}
\end{equation}
MAE is defined as:
\begin{equation}
MAE = \frac{1}{n_t}\sum_{i=1}^{n_t}\left | t_{i,i+1}^k - \hat{t}_{i,i+1}^k \right |.
\end{equation}
Here, $n_t$ represents the total number of observations in the test set. $t_{i,i+1}^k$ denotes the observed value of link travel time between stop $i$ and $i+1$ for the $k$th observation. $\hat{t}_{i,i+1}^k$ is the corresponding predicted value of the time component for the same observation. The index $k$ ranges from 1 to $n_t$.

Additionally, the width of the confidence intervals in the test set, was compared to evaluate the proposed method against those calculated using the baseline methods.
\end{appendix}





\end{document}